\documentclass[a4paper,UKenglish]{lipics-v2018}

\usepackage{mathpartir}
\usepackage{amsmath}
\usepackage{amssymb}
\usepackage{amsthm}
\usepackage{todonotes}
\usepackage{trsym}
\usepackage{color}
\usepackage{verbatim}
\usepackage{url}
\usepackage{wrapfig}
\usepackage{rotating}
\usepackage{scalerel}
\makeatletter
\def\url@leostyle{%
  \@ifundefined{selectfont}{\def\UrlFont{\sf}}{\def\UrlFont{\small\ttfamily}}}
\makeatother
\urlstyle{leostyle}

\usepackage[curve]{xypic}

\usepackage{graphics}
\usepackage{graphicx}
\usepackage[all]{xy}
\usepackage{eucal}
\usepackage{enumitem}

\DeclareMathOperator{\supp}{supp}

\newcommand{\id}{\mbox{\sl id}}
\newcommand{\after}{\mathrel{\circ}}

\newcommand{\Pow}{\mathcal{P}}
\newcommand{\Powne}{\mathcal{P}_{ne}}

\newcommand{\Cat}[1]{\ensuremath{\mathbf{#1}}}
\newcommand{\Sets}{\Cat{Sets}}

\newcommand{\Kl}{\mathcal{K}{\kern-.2ex}\ell}

\newcommand{\set}[2]{\{#1\;|\;#2\}}

\newcommand{\Alg}[1]{\mathbb{#1}}			
\DeclareMathOperator{\conv}{conv}			    
\DeclareMathOperator{\Ext}{Ext}			    
\DeclareMathOperator{\AlgCat}{Alg}			    
	

\DeclareMathOperator{\EM}{E{\kern-.2ex}M}

\newcommand{\Dis}{\mathcal{D}}				

\newcommand{\T}{\mathcal M}				


\newcommand{\Idmap}{\textrm{Id}}

%
%


\newcommand{\nePow}{\mathcal{P}_{ne}}

\newcommand{\cset}{\mathcal C}
\newcommand{\dset}{\mathcal D}
\newcommand{\done}{d}
\newcommand{\dtwo}{e}

\newcommand{\nplus}{\oplus}
\newcommand{\mplus}{+}

\theoremstyle{plain}
\newtheorem{proposition}[theorem]{Proposition}
\theoremstyle{remark}

\bibliographystyle{plainurl}

\title{Presenting convex sets of probability distributions by convex semilattices and unique bases}

\author{Filippo Bonchi}{University of Pisa, Italy}{filippo.bonchi@unipi.it}{}{}
\author{Ana Sokolova}{University of Salzburg, Austria}{ana.sokolova@cs.uni-salzburg.at}{}{}
\author{Valeria Vignudelli}{CNRS/ENS Lyon, France}{valeria.vignudelli@ens-lyon.fr}{}{}
\authorrunning{Bonchi, Sokolova, Vignudelli}

\nolinenumbers
\begin{document}
\maketitle

\begin{abstract}
We prove that every finitely generated convex set of finitely supported probability distributions has a unique base, and use this result to show that the monad of convex sets of probability distributions is presented by the algebraic theory of convex semilattices. 
\end{abstract}

\section{Introduction}
Models of computations exhibiting both nondeterministic and probabilistic behaviour are abundantly used in computed assisted verification~\cite{Baier2008,HermannsKK14,KNP02,DehnertJK017,vardi1985automatic,hansson1991time,segala1995probabilistic}, Artificial Intelligence~\cite{CPP09,Kaelbling:1998vs,RussellNorvig:2009}, and studied from semantics perspective~\cite{HeunenKSY17,StatonYWHK16,HermannsPSWZ11}. 
Indeed, probability is needed to quantitatively model uncertainty and belief, whereas nondeterminism enables modelling of incomplete information, unknown environment, implementation freedom, or concurrency. 

Since several decades, computer scientists have found it convenient to exploit algebraic methods to analyse computing systems. From an algebraic perspective, the interplay of nondeterminism and probability has been posing some remarkable challenges~\cite{VaraccaW06,DBLP:journals/corr/KeimelP16,Mio14,Jacobs08,Varacca03,Mislove00,Goubault-Larrecq08a,TixKP09a}. Nevertheless, several fundamental algebraic structures have been identified and studied in depth.

In this paper we focus on one of such structures, namely \emph{convex sets of probability distributions}. These sets give rise to a monad that is well known in the literature and has found applications in several works~\cite{Mislove00,Goubault-Larrecq08a,TixKP09a, Varacca03,VaraccaW06,Jacobs08}. In recent work \cite{BSV19arxiv}, we proved that this monad is presented by the algebraic theory of \emph{convex semilattices}. In this paper, we provide an alternative proof based on a simple property: We show that every (finitely generated) convex set of distribution has a \emph{unique base}.

{\bf Synopsis:} In Section \ref{sec:ubthm}, we show the unique base theorem in its simplest formulation. We introduce the basic categorical machinery in Section \ref{sec:monads}, while in Section \ref{sec:monadC} we recall the monad of interest as well as the theory of convex semilattices. Sections \ref{secproofOne} and \ref{sec:inverse} provide our alternative proof of the presentation of the monad.

\section{A unique base theorem for convex sets of probability distributions}\label{sec:ubthm}

Given a set $X$, a probability distribution is a function $d \colon X \to [0,1]$ such that $\sum_{x\in X} d(x)=1$. A probability distribution $d$ is finitely supported if $d(x)\neq 0$ for finitely many $x$. We call $\mathcal{D}(X)$ the set of finitely suported probability distributions over $X$.

A probability distribution $d\in \mathcal{D}(X)$ is a convex combination of the distributions $d_1,\dots d_n \in \mathcal{D}(X)$ if there exists $\alpha_1, \dots, \alpha_n$ such that $\sum_{i}\alpha_i =1$ and  for all $x$, $d(x)=\sum_i \alpha_i d_i(x)$. Hereafter we will just write the latter condition as $d=\sum_i \alpha_i d_i$.

The \emph{convex closure} of a subset $S\subseteq \mathcal{D}(X)$, written $\conv(S)$, is the set of all the convex combinations of the distributions in $S$. A subset $S\subseteq \mathcal{D}(X)$ is called \emph{convex} if $S=\conv(S)$. A convex set is said to be finitely generated if there exists $d_1,\dots, d_n\in  \mathcal{D}(X)$ such that $S=\conv(\{d_1,\dots, d_n\})$. 

\medskip

This is enough to introduce the set of non-empty, finitely-generated convex sets of distributions, hereafter denoted as $C(X)$.

\medskip

A \emph{base} for $S\in C(X)$ is a set $\{d_1,\dots, d_n\}$ such that  $S=\conv(\{d_1,\dots, d_n\})$ and for all $i\in 1\dots n$, $d_i \notin \conv(\{ d_j \, |\, j\neq i, 1 \le j \le n \})$. 
	
\begin{theorem}\label{thm:uniquebase}
For every $S\in C(X)$, there exists a unique base.	
\end{theorem}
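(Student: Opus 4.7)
My plan is to prove existence and uniqueness separately, with the conceptual punchline being that the unique base consists exactly of the \emph{extreme points} of $S$ (those distributions that cannot be written as a non-trivial convex combination of other elements of $S$). I'll pursue a direct algebraic argument rather than appealing to Minkowski-type theorems, since the required computation is elementary once the right quantities are written down.

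\textbf{Existence.} Starting from any finite generating set $\{d_1, \ldots, d_n\}$ for $S$ (which exists because $S \in C(X)$), iteratively delete any $d_i$ that happens to lie in $\conv(\{d_j : j \neq i\})$. Each such deletion preserves the generated convex set, and the process terminates in at most $n$ steps, leaving a base.

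\textbf{Uniqueness.} Let $B_1 = \{d_1, \ldots, d_n\}$ and $B_2 = \{e_1, \ldots, e_m\}$ be two bases of $S$. Since each $d_i \in S = \conv(B_2)$, write $d_i = \sum_k \gamma_{i,k} e_k$; similarly, each $e_k = \sum_j \delta_{k,j} d_j$. Substituting yields $d_i = \sum_j c_{i,j} d_j$ with $c_{i,j} = \sum_k \gamma_{i,k} \delta_{k,j}$ and $\sum_j c_{i,j} = 1$. If $c_{i,i} < 1$, then $d_i = \sum_{j \neq i} \tfrac{c_{i,j}}{1 - c_{i,i}} d_j$ contradicts $B_1$ being a base. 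Therefore $c_{i,i} = \sum_k \gamma_{i,k} \delta_{k,i} = 1$; combined with $\sum_k \gamma_{i,k} = 1$ and all coefficients lying in $[0,1]$, this forces $\delta_{k,i} = 1$ (hence $e_k = d_i$) for every index $k$ with $\gamma_{i,k} > 0$. At least one such $k$ exists, so $d_i \in B_2$. Swapping the roles of $B_1$ and $B_2$ gives the reverse inclusion and hence $B_1 = B_2$.

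I don't expect a serious obstacle: this is essentially the classical statement that the vertices of a polytope form its unique minimal generating set. The only care required is in the coefficient bookkeeping during the double substitution, and in noting that the equality $\sum_k \gamma_{i,k} \delta_{k,i} = 1$ together with $\gamma_{i,k}, \delta_{k,i} \in [0,1]$ and $\sum_k \gamma_{i,k} = 1$ really does force $\delta_{k,i} = 1$ whenever $\gamma_{i,k} > 0$. No topology, compactness, or dimension argument is used; everything reduces to elementary manipulation of convex combinations.
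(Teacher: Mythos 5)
Your proof is correct and follows essentially the same route as the paper's elementary Proof II: the same double substitution $d_i = \sum_j c_{i,j} d_j$ with $c_{i,j} = \sum_k \gamma_{i,k}\delta_{k,j}$, and the same key step that the base property forces $c_{i,i}=1$. Your endgame is in fact slightly more direct than the paper's (which goes on to extract a permutation from the vanishing off-diagonal products): you conclude $d_i \in B_2$ immediately from $\sum_k \gamma_{i,k}\delta_{k,i} = 1 = \sum_k \gamma_{i,k}$ and nonnegativity, and finish by symmetry.
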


We present two proofs of this property. Proof I is based on functional analysis and the strong theorem of Krein-Milman~\cite{Rudin91}; Proof II is explicit and concrete. 

\medskip

\noindent{\bf Proof I.}
Let $S$ be an element of $C(X)$. Note that then $S$ is a subset of $\dset(X) \subseteq \mathbb{R}^X$ and hence a subset of a locally convex topological vector space ($\mathbb R^X$ with the product topology). 
Consider the family $$\mathcal B = \{B \subseteq S \mid S = \conv(B)\}.$$
It is obvious that $B$ is minimal in $\mathcal B$ if and only if no element $d \in B$ satisfies $d \in \conv(B\setminus\{d\})$. We are going to show that $\mathcal B$ contains a smallest element.

We first show that for all $B \in \mathcal B$, $\Ext(S) \subseteq B$ (*), with $\Ext(S)$ being the set of extreme points of $S$.

Indeed, let $d \in \Ext(S)$. Then $d \in S$ and can be written as $d = \sum_{d_i \in B} p_id_i = p_i\cdot d_i + (1-p)\cdot e$ for some $p_i \neq 0$ and $e \in S$, and hence by extremality of $d$ we have $d = d_i = e$ yielding $d \in B$. 

Next, we show that $S = \conv(\Ext(S))$, which means that $\Ext(S) \in \mathcal{B}$ and hence together with (*) shows that $\Ext(S)$ is the smallest element of $\mathcal B$. This smallest element $\Ext(S)$ is the unique base of $S$.  Pick a finite $B_0 = \{d_1, \dots, d_n\} \in \mathcal B$. Then $S = \Phi(\Delta_n)$ for $$\Delta_n = \{(x_1, \dots, x_n) \in \mathbb R^n \mid x_i \in [0,1], \sum_i x_i = 1\}$$ and 
$\Phi\colon \mathbb R^n \to X$ given by $\Phi(x_1, \dots, x_n) = \sum_i x_id_i $. Note that $\Delta_n$ is compact, by Heine-Borel, as it is a closed and bounded subset of $\mathbb R^n$, and $\Phi$ is continuous, since we are in a topological vector space and hence algebraic operations are continuous. 
As a consequence, $S$ is compact as a continuous image of a compact set. Now, Krein-Milmann applies, yielding that $S = \overline{\conv}(\Ext(S))$ with $\overline{\conv}$ denoting the closed convex hull and hence
$$ S = \overline{\conv}(\Ext(S)) = \conv(\Ext(S))$$ since by the same argument as above $\conv(\Ext(S))$ is compact and hence closed. \qed

Instead of the Krein-Milman theorem, one could use in this proof its predecessor from classical convex analysis in $\mathbb R^n$, e.g. \cite[Theorem 18.5]{Rockafellar72}. The reason is that since we deal with finitely generated convex subsets of finitely supported distributions, such subsets are actually elements of $CX$ for a finite set $X$. 

\medskip

\noindent{\bf Proof II.}
Existence of the base comes from the property that $S$ is finitely generated.
In the rest of this section we prove uniqueness; namely if $\{d_1,\dots, d_n\}$ and $\{d_1',\dots, d_m'\}$ are two bases for some $S\in \mathcal{D}(X)$, then $\{d_1,\dots, d_n\}=\{d_1',\dots, d_m'\}$, that is $n=m$ and there exists a permutation $\rho\colon n\to n$ such that $d_i'=\rho(d_i)$.

\medskip

Let  $\{d_1,\dots, d_n\}$ and $\{d_1',\dots, d_m'\}$ be two bases for $S\in \mathcal{D}(X)$. Then for all $i$,
$$d_i=\conv (\{d_1',\dots, d_m'\} ) \text{ and } d_i'=\conv(\{d_1,\dots, d_n\})\text{.}$$ 
By unfolding the definition of $\conv$, this just means that for all $i$ there exist $\alpha_{i,j}$ and $\alpha_{i,j}'$ such that $\sum_j \alpha_{i,j}=1$, $\sum_j \alpha_{i,j}'=1$, 
\begin{equation}\label{eq:proofofuniqueness}d_i=\sum_{j\in\{1\dots m\}} \alpha_{i,j }d_j' \text{ and } d_i'=\sum_{j\in\{1\dots n\}} \alpha_{i,j }'d_j\text{.}\end{equation}
By replacing $d_j'$ in the left equation in \eqref{eq:proofofuniqueness} with the one in the right we obtain that for all $i$
$$d_i=\sum_{j\in\{1\dots m\}} \alpha_{i,j }( \sum_{k\in\{1\dots n\}} \alpha_{j,k }'d_k)$$ 
This is equivalent to
$$d_i=\sum_{k\in\{1\dots n\}} (\sum_{j\in\{1\dots m\}} \alpha_{i,j }  \alpha_{j,k }') d_k$$ 
and thus 
\begin{equation}\label{eq:di}d_i= (\sum_{j\in\{1\dots m\}} \alpha_{i,j }   \alpha_{j,i }') d_i  \, +\, \sum_{k\in\{1\dots n\}\setminus \{i\}} (\sum_{j\in\{1\dots m\}} \alpha_{i,j }  \alpha_{j,k }') d_k\end{equation}

By reasoning in the same way, but replacing in the right equation of \eqref{eq:proofofuniqueness} the definitions of $d_i$ on the left, one obtains
\begin{equation}\label{eq:di'}d_i'= (\sum_{j\in\{1\dots n\}} \alpha_{i,j }'   \alpha_{j,i }) d_i'  \, +\, \sum_{k\in\{1\dots m\}\setminus \{i\}} (\sum_{j\in\{1\dots n\}} \alpha_{i,j }'  \alpha_{j,k }) d_k'\end{equation} 

\medskip

Now observe that all these equations are of the shape $e=\alpha e+ (1-\alpha)e_1$ for $\alpha\in[0,1]$ and $e,e_1\in \mathcal{D}(X)$. Whenever $e\neq e_1$, this kind of equation has $\alpha=1$ as unique solution. 
Now observe that in \eqref{eq:di}, we have that $d_i \neq \sum_{k\in\{1\dots n\}\setminus \{i\}} (\sum_{j\in\{1\dots m\}} \alpha_{i,j }  \alpha_{j,k }') d_k$ otherwise $d_i$ would be expressible as a convex combination of the others and therefore $\{d_1\dots, d_n\}$ would not be a base. 
Therefore we have that
\begin{equation}\label{eq:total1}\sum_{j\in\{1\dots m\}} \alpha_{i,j }   \alpha_{j,i }'  =1 \text{ for all }i \in \{1 \dots n\}\end{equation}
and that for all $k \in\{1\dots n\}\setminus \{i\}$, $\sum_{j\in\{1\dots m\}} \alpha_{i,j }  \alpha_{j,k }'=0$. Since all the summands are non-negative, this entails that
\begin{equation}\label{eq:10}
\alpha_{i,j }  \alpha_{j,k }'=0 \text{ for all }i \in \{1 \dots n\},\, k\in \{1\dots n\}\setminus \{i\} \text{ and } j\in\{1\dots m\}.
\end{equation}
By reasoning in the same way, we obtain from \eqref{eq:di'},
\begin{equation}\sum_{j\in\{1\dots n\}} \alpha_{i,j }'   \alpha_{j,i }  =1 \text{ for all }i \in \{1 \dots m\}\end{equation}
and 
\begin{equation}\label{eq:20}
\alpha_{i,j }'  \alpha_{j,k }=0 \text{ for all } i \in \{1 \dots m\},\, k\in \{1\dots m\}\setminus \{i\} \text{ and } j\in\{1\dots n\}.
\end{equation}

\medskip

We now prove that it must be that $n=m$ and there exists a permutation $\rho\colon\{1,\dots,n\} \to \{1,\dots,n\}$ such that $\alpha_{i,\rho(i)} = \alpha'_{\rho(i),i} = 1$ for all $i \in \{1,\dots,n\}$. 

\medskip

First we prove that for all $i$, there exists only one $j$ such that $\alpha_{i,j}=1$ and for all $k\neq j$, $\alpha_{i,k}=0$. Assume that there exists an $i,j,j'$ such that $\alpha_{i,j}=x$ and $\alpha_{i,j'}=x'$ with $x,x'\in (0,1]$. By \eqref{eq:10} one has that for all $k\neq i$, $\alpha_{j,k}'=0$ and $\alpha_{j',k}'=0$. Since $\sum_k \alpha_{j,k}'=1$ and $\sum_k \alpha_{j',k}'=1$, we have that $\alpha_{j,i}'=1$ and $\alpha_{j',i}'=1$. From \eqref{eq:20}, $\alpha_{i,l}=0$ for all $l\neq j$ and $\alpha_{i,l'}=0$
for all $l'\neq j'$. Therefore $j=j'$. This means that there exists only one $j$, such that $\alpha_{i,j}\neq 0$. Since $\sum_k \alpha_{i,k}=1$, we have that $\alpha_{i,j}=1$ and $\alpha_{i,k}=0$ for all $k\neq j$.

This defines a function $\rho\colon \{1,\dots,n\} \to \{1,\dots,m\}$ mapping each $i$ into the unique $j$ such that $\alpha_{i,j}=1$. By the same reasoning we can define a function $\rho'\colon \{1,\dots,m\} \to \{1,\dots,n\}$ mapping each $i$ in the only $j$ such that $\alpha_{i,j}'=1$.

\medskip

We conclude by showing that $\rho'$ must be the inverse of $\rho$.
Assume that  $\alpha_{i,j}=1$. For what we have proved so far, $\alpha_{i,k}=0$ for all $k\neq j$. By \eqref{eq:total1}, $\alpha_{j,i}'=1$. \qed

\section{Monads and presentations}\label{sec:monads}
Theorem~\ref{thm:uniquebase} states the existence of a unique base for every convex subset of probability distributions. 
In the remainder of this paper, we exploit this result to illustrate an alternative proof of Theorem 4 in~\cite{BSV19arxiv} that provides a presentation of the monad $C$~\cite{Mislove00,Goubault-Larrecq08a,TixKP09a, Varacca03,VaraccaW06,Jacobs08}. In Section~\ref{sec:monadC}, we recall the monad as well as its presentation given in~\cite{BSV19arxiv}. In this section, we recall some basic facts about monads  and presentations.

\medskip

A \emph{monad}  on $\Sets$ is
a functor $\T\colon\Sets \rightarrow \Sets$ together with two
natural transformations: a unit $\eta\colon \Idmap
\Rightarrow \T$ and multiplication $\mu \colon \T^{2} \Rightarrow
\T$ that satisfy the laws $\mu \after \eta\T = \mu \after \T\eta = \id$ and $\mu\after \T\mu = \mu \after\mu\T$.

A \emph{monad map} from a monad $\T$ to a monad $\hat\T$ is a natural transformation $\sigma\colon \T\Rightarrow\hat\T$ that makes the following diagrams commute, with $\eta, \mu$ and $\hat\eta, \hat\mu$ denoting the unit and multiplication of $\T$ and $\hat\T$, respectively, and $\sigma\sigma = \sigma\after \T \sigma = \hat\T \sigma \after \sigma_\T$.
\vskip-0.5cm
$$\xymatrix@R-1pc{
&{X}\ar[dr]_{\hat\eta}\ar[r]^-{\eta} & \T X\ar[d]^{\sigma}
&  
{\T\T X}\ar[d]_{\mu}\ar[r]^-{\sigma\sigma} & {\hat\T\hat\T X}\ar[d]^{\hat\mu} &
\\
&& {\hat\T X} 
& 
{\T X}\ar[r]_-{\sigma} & {\hat\T X}&
}$$
\vskip-0.2cm
If $\sigma\colon \T X \to \hat\T X$ is an epi monad map, then $\hat\T$ is a \emph{quotient} of $\T$. If it is a mono, then $\T$ is a \emph{submonad} of $\hat\T$. If it is an iso, the two monads are isomorphic.

\medskip

An important example of monad is provided by the \emph{free monad of terms}. Given a signature $\Sigma$, namely a set of operation symbols equipped with an arity,  the free monad $T_\Sigma\colon \Sets \to \Sets$ of terms over $\Sigma$ maps a set $X$ to the set of all $\Sigma$-terms with variables in $X$, and $f\colon X \to Y$ to the function that maps a term over $X$ to a term over $Y$ obtained by substitution according to $f$. The unit maps a variable in $X$ to itself, and the multiplication is term composition. 

Given a set of axioms $E$ over $\Sigma$-terms, one can define the smallest congruence generated by the axioms, denoted by $=_E$. Hereafter we write $[t]_E$ for the $=_E$-equivalence class of the $\Sigma$-term $t$ and $T_{\Sigma,E}(X)$ for the set of $E$-equivalence classes of $\Sigma$-terms with variables in $X$. The assignment $X \mapsto T_{\Sigma,E}(X)$ gives rise to a functor  $T_{\Sigma,E} \colon \Sets \to \Sets$ where the behaviour on functions is defined as for $T_\Sigma$. Such functor carries the structure of a monad: the unit $\eta^E\colon \Idmap \Rightarrow T_{\Sigma,E}$ and the multiplication $\mu^E \colon T_{\Sigma,E}T_{\Sigma,E}\Rightarrow T_{\Sigma,E}$ are defined as $\eta^{E} (x)= [x]_{E}$ and 
$\mu^{E} [ t \{[t_{i}]_{E}/ x_{i}\}]_{E}= [t \{t_{i}/ x_{i}\}]_{E}$.
%
%

%

An \emph{algebraic theory} is a pair $(\Sigma, E)$ of signature $\Sigma$ and a set of equations $E$.  We say that $(\Sigma,E)$ provides a \emph{presentation} for a monad $\T$ if $T_{\Sigma,E}$ is isomorphic to $\T$.

We next introduce several monads on $\Sets$ together with their presentations.

\medskip

\noindent{\bf Nondeterminism.} The non-empty finite powerset monad $\Pow_{ne}$ maps a set $X$ to the set of non-empty finite subsets $\Pow_{ne} X = \{U \mid U \subseteq X, \,\, U \textrm{ is finite and non-empty}\}$ and a function $f\colon X\to Y$ to $\Pow_{ne} f\colon \Pow_{ne} X\to \Pow_{ne} Y$, $\Pow_{ne} f (U) = \{f(u) \mid u\in U\}$. 
The unit $\eta$ of $\Pow_{ne}$ is given by singleton, i.e., $\eta(x) = \{x\}$ and the multiplication $\mu$ is given by union, i.e., $\mu(S) = \bigcup_{U \in S} U$ for $S \in \Pow_{ne}\Pow_{ne} X$. 

Let $\Sigma_N$ be the signature consisting of a binary operation $\nplus $.
Let $E_N$ be the following set of axioms, the axioms of semilattice:
$$\begin{array}{ccc}
(x\nplus y)\nplus z& \stackrel{(A)}{=}& x\nplus (y\nplus z) \\
x\nplus y& \stackrel{(C)}{=}& y\nplus x\\
x\nplus x&\stackrel{(I)}{=}&x
\end{array}$$
It is easy to show that the algebraic theory $(\Sigma_N,E_N)$ provides a presentation for the monad $\nePow$, in the sense that there exists an isomorphism of monads $\iota^N\colon T_{\Sigma_N,E_N}\Rightarrow \nePow$.

\medskip

\noindent{\bf Probability.} The finitely supported probability distribution monad $\Dis$ is defined, for a set $X$ and a function $f\colon X \to Y$,  as
\vskip-0.7cm
\begin{align*}
&\Dis X
 = 
\set{\varphi\colon X \to [0,1]}{\sum_{x \in X} \varphi(x) = 1,\, \supp(\varphi) \text{~is~finite}}\\
&\Dis f(\varphi)(y)
 = 
\sum\limits_{x\in f^{-1}(y)} \varphi(x).	
\end{align*}
\vskip-0.2cm
The unit of $\Dis$ is given by a Dirac
distribution $\eta(x) = \delta_x = ( x \mapsto 1)$ for $x \in X$ and
the multiplication by $\mu(\Phi)(x) = \sum_{\varphi \in \supp(\Phi)}
\Phi(\varphi)\cdot \varphi(x)$ for $\Phi \in \Dis\Dis X$.
\noindent We sometimes write $\sum_{i \in I} p_i x_i$ for a distribution $\varphi$ with $\supp(\varphi) = \{x_i \mid i\in I\}$ and $\varphi(x_i) = p_i$.

Let $\Sigma_P$ be the signature consisting of a binary operation $+_p$ for all $p\in (0,1)$.
Let $E_P$ be the following set of axioms, the axioms of a barycentric algebra also called convex algebra~:\footnote{There is another equivalent presentation for convex algebras with a signature involving arbitrary convex combinations and two axioms, projection and barycenter. In this paper we will mainly use the binary convex operations.}
$$\begin{array}{ccc}
(x+_qy)+_pz& \stackrel{(A_p)}{=}& x+_{pq}(y+_{\frac{p(1-q)}{1-pq}}z)\\
x+_py& \stackrel{(C_p)}{=} & y+_{1-p}x\\
x+_px&\stackrel{(I_p)}{=}&x
\end{array}$$
The algebraic theory $(\Sigma_P,E_P)$ provides a presentation for the monad $\mathcal{D}$~\cite{swirszcz:1974,semadeni:1973,doberkat:2006,doberkat:2008,jacobs:2010}, in the sense that there exists an isomorphism of monads $\iota^P\colon T_{\Sigma_P,E_P}\Rightarrow \mathcal{D}$.


\subsection{A well known recipe for constructing monad morphisms}\label{sec:recipe}
To prove that an algebraic theory $(\Sigma,E)$ presents a monad $\T$, one has to provide $\iota \colon T_{\Sigma,E}\Rightarrow \T$ that (a) is a monad map and (b) is an isomorphism. While the proof of (b) often require some ad-hoc normal form arguments, the proof of (a) can be significantly simplified   by using some some standard categorical machinery.

In this section, we illustrate a well known recipe which allows for constructing a monad map $\iota\colon T_{\Sigma,E} \Rightarrow \T$ in a principled way.  We begin by recalling Eilenberg-Moore algebras.

\medskip

To each  monad $\T$, one associates the  Eilenberg-Moore category
$\EM(\T)$ of $\T$-algebras. Objects of
$\EM(\T)$ are pairs $\Alg A = (A, a)$ of a set $A \in \Sets$ and a map
$a\colon \T A \rightarrow A$,
making the first two
diagrams below commute. 
\vskip-0.3cm
$$\xymatrix@R-1pc@C-1pc{
A\ar@{=}[dr]\ar[r]^-{\eta} & \T A\ar[d]^{a}
& 
\T^{2}A\ar[d]_{\mu}\ar[r]^-{\T a} & \T A\ar[d]^{a}
& 
\T A\ar[d]_{a}\ar[r]^-{\T h} & \T B\ar[d]^{b} \\
& A 
&
\T A\ar[r]_-{a} & A
 &
A\ar[r]_-{h} & B
}$$
\vskip-0.2cm
\noindent A homomorphism from an algebra $\Alg A = (A, a)$ to an algebra $\Alg B = (B, b)$ is a map $h\colon
A\rightarrow B$ between the underlying sets making the third diagram above commute.

It is well known that, when $\T$ is the monad $T_{\Sigma,E}$ for some algebraic theory $(\Sigma,E)$, $\EM(\T)$ is isomorphic to the category $\AlgCat(\Sigma,E)$ of $(\Sigma,E)$-algebras and their morphisms. 
A $\Sigma$-algebra $(X, \Sigma_X)$ consist of a set $X$ together with a set $\Sigma_X$ of operations $\hat{o}_X\colon X^n \to X$, one for each operation symbol $o \in \Sigma$ of arity $n$. A $(\Sigma,E)$-algebra is a $\Sigma$-algebra where all the equations in $E$ hold. A homomorphism $h$ from a $(\Sigma,E)$-algebra $(X, \Sigma_X)$ to a $(\Sigma,E)$-algebra $(Y, \Sigma_Y)$ is a function $h\colon X \to Y$ that commutes with the operations, i.e., $h \after \hat{o}_X = \hat{o}_Y \after h^n$ for all $n$-ary $o \in \Sigma$.

For instance, $(\Sigma_N,E_N)$-algebras are semilattices, namely a set $X$ equipped with a binary operation $\hat{\nplus}_X$ that is associative, commutative and idempotent. A semilattice homomorphism is a function $h\colon X \to Y$ such that $h(x_1 \hat{\nplus}_X x_2)= h(x_1) \hat{\nplus}_Y h(x_2)$ for all $x_1,x_2\in X$.

\medskip

Now we can display an abstract recipe for constructing a monad map $\iota\colon T_{\Sigma,E} \Rightarrow \T$, which consists of three steps:
\begin{itemize}[leftmargin=20pt]
\item[(A)] For each set $X$, provide $\T X$ with the structure of a $(\Sigma,E)$-algebra, namely functions $\hat{o}_X \colon (\T X)^n \to \T X$ for each $o\in \Sigma$, that satisfy the equations in $E$;
\item[(B)] Prove that for each function $f\colon X \to Y$, $\T f$ is a $(\Sigma,E)$-algebra homomorphism;
\item[(C)] Prove that for each set $X$, $\mu^{\T}_X\colon \T\T X \to \T X$ is a $(\Sigma,E)$-algebra homomorphism.
\end{itemize}

By the correspondence of $(\Sigma,E)$-algebras and Eilenberg-Moore algebra for $T_{\Sigma,E}$ and (A), we obtain a $T_{\Sigma,E}$-algebra $\alpha^{\sharp}_X\colon T_{\Sigma,E} \T X \to \T X$ for each set $X$. These $\alpha^{\sharp}_X$ give rise to a natural transformation $\alpha^{\sharp}\colon T_{\Sigma,E} \T \Rightarrow \T$ by (B) and the correspondence of  $(\Sigma,E)$-homomorphisms and  $T_{\Sigma,E}$-homomorphisms. The monad morphism $\iota\colon T_{\Sigma,E} \Rightarrow \T$ is then obtained by (C) and the following theorem\footnote{This theorem is known, but it is not easy to find an original reference for it. We thank Jurriaan Rot for recalling the theorem and the proof with us.}.

\begin{theorem} Let $(\T, \eta^\T, \mu^{\T})$ and $(\hat{\T},\eta^{\hat{\T}},\mu^{\hat{\T}})$ be two monads. Let  $\alpha^{\sharp}\colon {\T} \hat{\T} \Rightarrow \hat{\T}$ be a natural transformation such that $\alpha^{\sharp}_X \colon {\T} \hat{\T}X \to \hat{\T}X$ is an  Eilenberg-Moore algebra for ${\T}$ and that $\mu_X^{\hat{\T}} \colon \hat{\T}\hat{\T}X \to \hat{\T}X$ is an ${\T}$-algebra morphism from $(\hat{\T}\hat{\T}X, \alpha^{\sharp}_{\hat{\T}X})$ to $(\hat{\T}X, \alpha^{\sharp}_X)$. Then $$\iota:= \xymatrix{{\T} \ar@{=>}[r]^{{\T}\eta^{\hat{\T}}} & {\T}\hat{\T} \ar@{=>}[r]^{\alpha^{\sharp}} & \hat{\T}}$$ is a monad map.
\end{theorem}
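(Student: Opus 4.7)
The plan is to verify the three conditions that make $\iota = \alpha^{\sharp} \circ \T\eta^{\hat\T}$ a monad map: naturality, the unit equation, and the multiplication equation. Naturality is immediate, since $\iota$ is a vertical composition of two natural transformations. What remains is to establish the unit equation $\iota \circ \eta^{\T} = \eta^{\hat\T}$ and the multiplication equation $\iota \circ \mu^{\T} = \mu^{\hat\T} \circ \hat\T\iota \circ \iota_{\T}$.

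For the unit equation I would expand
$$\iota_X \circ \eta^{\T}_X = \alpha^{\sharp}_X \circ \T\eta^{\hat\T}_X \circ \eta^{\T}_X,$$
apply naturality of $\eta^{\T}$ at the map $\eta^{\hat\T}_X$ to rewrite $\T\eta^{\hat\T}_X \circ \eta^{\T}_X$ as $\eta^{\T}_{\hat\T X} \circ \eta^{\hat\T}_X$, and then invoke the unit axiom $\alpha^{\sharp}_X \circ \eta^{\T}_{\hat\T X} = \id$ of the Eilenberg-Moore algebra $\alpha^{\sharp}_X$ to collapse the result to $\eta^{\hat\T}_X$.

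For the multiplication equation the strategy is to reduce both sides to the common form $\alpha^{\sharp}_X \circ \T\iota_X$. On the left, naturality of $\mu^{\T}$ at $\eta^{\hat\T}_X$ rewrites $\T\eta^{\hat\T}_X \circ \mu^{\T}_X$ as $\mu^{\T}_{\hat\T X} \circ \T\T\eta^{\hat\T}_X$, and then the Eilenberg-Moore multiplication axiom $\alpha^{\sharp}_X \circ \mu^{\T}_{\hat\T X} = \alpha^{\sharp}_X \circ \T\alpha^{\sharp}_X$ yields $\alpha^{\sharp}_X \circ \T(\alpha^{\sharp}_X \circ \T\eta^{\hat\T}_X) = \alpha^{\sharp}_X \circ \T\iota_X$. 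On the right, three ingredients combine in order: naturality of $\alpha^{\sharp}$ at $\iota_X$ to commute $\hat\T\iota_X$ past $\alpha^{\sharp}_{\T X}$, the hypothesis that $\mu^{\hat\T}_X$ is a $\T$-algebra morphism to swap $\mu^{\hat\T}_X$ with $\alpha^{\sharp}_{\hat\T X}$, and finally naturality of $\eta^{\hat\T}$ together with the unit law of the monad $\hat\T$ to simplify the inner composite $\mu^{\hat\T}_X \circ \hat\T\iota_X \circ \eta^{\hat\T}_{\T X}$ back to $\iota_X$, leaving $\alpha^{\sharp}_X \circ \T\iota_X$ again.

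The main obstacle is essentially bookkeeping: every individual move is a routine rewrite by naturality of one of $\eta^{\T}$, $\eta^{\hat\T}$, $\mu^{\T}$, $\alpha^{\sharp}$, or by an algebra or monad axiom, but keeping track of the correct instantiations and the order of rewrites requires some care. The one conceptually essential hypothesis is that $\mu^{\hat\T}$ is a $\T$-algebra morphism between $\alpha^{\sharp}_{\hat\T X}$ and $\alpha^{\sharp}_X$; this is precisely what permits the crucial interchange of $\mu^{\hat\T}$ with $\alpha^{\sharp}$ on the right-hand side, without which the two sides of the multiplication equation cannot be brought into the same form.
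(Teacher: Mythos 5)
Your proposal is correct and follows essentially the same route as the paper: the unit square via naturality of $\eta^{\T}$ plus the Eilenberg--Moore unit axiom, and the multiplication square by reducing both sides to the common form $\alpha^{\sharp}_X \after \T\iota_X$, using naturality of $\mu^{\T}$ and the EM multiplication axiom on one side and the hypothesis that $\mu^{\hat\T}_X$ is a $\T$-algebra morphism on the other. The only cosmetic difference is that you work with the composite $\hat\T\iota \after \iota_{\T}$ (invoking naturality of $\alpha^{\sharp}$ explicitly), whereas the paper uses $\iota_{\hat\T} \after \T\iota$ and isolates the intermediate identity $\alpha^{\sharp}_X = \mu^{\hat\T}_X \after \iota_{\hat\T X}$; both are valid and use the same hypotheses in the same roles.
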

\begin{proof}
In order to prove that $\iota$ is a monad map, we need to prove that the following two diagrams commute.
\begin{equation}\label{eq:proofmorphism}
\xymatrix{
X \ar[rd]_{\eta^{\hat{\T}}}\ar[r]^{\eta^\T}& {\T}X \ar[d]^{\iota_X}\\
& {\hat{\T}}X }
\qquad
\xymatrix{
{\T}{\T}X \ar[d]_{\mu^E}\ar[r]^{{\T}\iota_X}& {\T}{\hat{\T}}X \ar[r]^{\iota_{{\hat{\T}}X}}
& {\hat{\T}}{\hat{\T}}X \ar[d]^{\mu^{\hat{\T}}}\\
{\T}X \ar[rr]_{\iota_X}&& {\hat{\T}}X
}
\end{equation}

For proving commutation of the diagram on the left, it is enough to recall that $\iota = \alpha^{\sharp} \after {\T}\eta^{{\hat{\T}}}$ and observe that the following diagram commutes: the top square commutes by naturality of $\eta^{\T}$ and the bottom triangle commutes since $\alpha^{\sharp}_X$ is an Eilenberg Moore algebra for ${\T}$.
$$
\xymatrix{
X \ar[r]^{\eta^{\T}_X}\ar[d]_{\eta^{{\hat{\T}}}_X}& {\T}X\ar[d]^{{\T}\eta^{{\hat{\T}}}} \\
{\hat{\T}}X \ar[rd]_{id} \ar[r]^{\eta^{\T}_{{\hat{\T}}X}}& {\T} {{\hat{\T}}}X\ar[d]^{\alpha^\sharp_X}\\
& {{\hat{\T}}}X
}$$

In order to prove the commutation of the diagram on the right in \eqref{eq:proofmorphism}, we need the assumption that $\mu_X^{\hat{\T}}$ is a homomorphism of ${\T}$-algebras, namely that the following diagram commutes.
\begin{equation}\label{eqref:mumorphism}
\xymatrix{
{\T}{\hat{\T}}{\hat{\T}}X \ar[d]_{\alpha^{\sharp}_{{\hat{\T}}X}} \ar[r]^{{\T}\mu_X^{{\hat{\T}}}} & {\T} {{\hat{\T}}}X \ar[d]^{\alpha_X^{\sharp}} \\
{{\hat{\T}}}{{\hat{\T}}}X \ar[r]_{\mu_X^{\hat{\T}}} & {\hat{\T}}X
}
\end{equation}

By recalling that $\iota = \alpha^{\sharp} \after {\T}\eta^{{\hat{\T}}}$, the left triangle below commutes and, since \eqref{eqref:mumorphism} commutes, the whole following diagram commutes.
$$\xymatrix{
{\T}{\hat{\T}}X \ar[r]^{{\T}\eta_{{\hat{\T}}X}^{\hat{\T}}} \ar[rd]_{\iota_{{\hat{\T}}X}} &{\T}{\hat{\T}}{\hat{\T}}X \ar[d]^{\alpha^{\sharp}_{{\hat{\T}}X}} \ar[r]^{{\T}\mu_X^{\hat{\T}}} & {\T}{\hat{\T}}X \ar[d]^{\alpha_X^{\sharp}} \\
& {\hat{\T}}{\hat{\T}}X \ar[r]_{\mu_X^{\hat{\T}}} & {\hat{\T}}X
}
$$
Since ${\T}\mu_X^{{\hat{\T}}} \after {\T}\eta_{{{\hat{\T}}}X}^{{\hat{\T}}} = {\T}(\mu_X^{\hat{\T}} \after \eta_{{\hat{\T}}X}^{\hat{\T}}) ={\T}(id_{{\hat{\T}}X})=id_{{\hat{\T}}X}$, we have $$\alpha^{\sharp}_X = \mu_X \after \iota_{{\hat{\T}}X}\text{.}$$

So, proving that the right diagram in \eqref{eq:proofmorphism} commutes, amounts to proving that the following diagram commutes.
$$\xymatrix{
{\T}{\T}X \ar[d]_{\mu^E}\ar[r]^{{\T}\iota_X}& {\T}{\hat{\T}}X \ar[d]^{\alpha^{\sharp}_X} \\
{\T}X \ar[r]_{\iota_X}& {\hat{\T}}X
}$$
By recalling that $\iota = \alpha^{\sharp} \after {\T}\eta^{\hat{\T}}$ it is equivalent to prove that the following commutes
$$\xymatrix{
{\T}{\T}X \ar[d]_{\mu^{\T}_X}\ar[r]^{{\T}{\T}\eta_X^{\hat{\T}}}& {\T}{\T}{\hat{\T}}X \ar[r]^{{\T}\alpha^{\sharp}_X} \ar[d]_{\mu_{{\hat{\T}}X}^{\T}} & {\T}{\hat{\T}}X \ar[d]^{\alpha^{\sharp}_X} \\
{\T}X \ar[r]_{\T \eta_X^{\hat{\T}}}&{\T}{\hat{\T}}X \ar[r]_{\alpha^{\sharp}_X}& {\hat{\T}}X
}$$
The left square commutes by naturality of $\mu^{\T}$. The right square commutes since $\alpha^{\sharp}_X$ is an Eilenberg-Moore algebra for ${\T}$.
\end{proof}

The function $\iota_X\colon T_{\Sigma,E}X \to \T X$ obtained by the above recipe can be inductively defined for all $x\in X$, $t_1,\dots, t_n \in T_{\Sigma}X$ and $n$-ary operations $o$ in $\Sigma$ as follows.
\begin{equation}
\iota_X([x]_E) = \eta_X^{\T}(x) \qquad \iota_X([o(t_1, \dots, t_n)]_E) = \hat{o}_X (\iota_X[t_1]_E, \dots, \iota_X[t_n]_E)\text{.}
\end{equation}
The fact that the functions $\hat{o}_X$ form a $(\Sigma,E)$-algebra ensures that $\iota$ is a well defined function, namely if $t=_E t'$, then $\iota([t]_E)=\iota([t']_E)$.

\medskip

We conclude this section by shortly illustrating how to apply the above recipe to the monad for non-determinism and the one for probability discussed above. To construct a monad map $\iota^N\colon T_{\Sigma_N,E_N}\Rightarrow \Powne$, we define for all sets $X$ the binary function $\hat{\nplus}\colon \Powne(X)\times \Powne(X) \to \Powne(X)$ as the union $\cup$. This is associative, commutative and idempotent, so the axioms in $E_N$ are satisfied, or in other words, this forms a semilattice. This corresponds to point (A) of the recipe. It is not difficult to check (B) and (C). The resulting monad map is defined for all sets $X$ as 
\begin{equation*}
\iota_X^N([x]_{E_N}) = \{x\} \qquad \iota_X^N([t_1 \nplus t_2]_{E_N}) = \iota_X^N([t_1]_{E_N}) \cup \iota_X^N([t_2]_{E_N})\text{.}
\end{equation*}
To construct the monad map $\iota^P\colon T_{\Sigma_P,E_P}\Rightarrow \Dis$,  we define for all $p\in (0,1)$ and all sets $X$ the binary function $\hat{+}_p\colon \Dis(X)\times \Dis(X) \to \Dis(X)$ as 
\begin{equation}
d_1\hat{+}_pd_2 = pd_1 +(1-p)d_2\text{.}
\end{equation}
One can check that the three axioms in $E_P$ are satisfied (distributions form a famous convex algebra), and that points (B) and (C) of the recipe hold. The resulting monad map is defined for all sets $X$ as 
\begin{equation}\label{eq:iotap}
\iota_X^P([x]_{E_P}) = \delta_x \qquad \iota_X^P([t_1 +_p t_2]_{E_P}) = p\iota_X^P([t_1]_{E_P}) + (1-p) \iota_X^P([t_2]_{E_P})\text{.}
\end{equation}

\section{The monad for non-determinism and probability}\label{sec:monadC}
In this section, we recall the monad for non-determinism and probability, its presentation, and we illustrate some interesting properties.

The monad  $C\colon \Sets \to \Sets$ maps a set $X$ into $CX$, namely the set of non-empty, finitely-generated convex subsets of distributions on $X$ (as defined in Section \ref{sec:ubthm}). 
For a function $f \colon X \to Y$, $Cf\colon CX \to CY$ is given by 
$Cf(S) = \{\Dis f(d) \mid d \in S\}$.
The unit of $C$ is $\eta\colon X \to CX$ given by $\eta(x) = \{ \delta_x\}$.
The multiplication of $C$, $\mu\colon CCX \to CX$ can be expressed in concrete terms as follows~\cite{Jacobs08}. Given $S\in CCX$,
$$\mu (S) = \bigcup_{\Phi \in S} \{ \sum_{U \in \supp \Phi} \Phi(U)\cdot d \mid d \in U\}.$$ 

Let $\Sigma$ be the signature $\Sigma_N \cup \Sigma_P$. Let $E$ be the sets of axioms consisting of $E_N$, $E_p$ and the following distributivity axiom:

$$(x\nplus y) +_p z\stackrel{(D)}{=}(x+_p z) \nplus (y +_p z)
$$
This theory $(\Sigma, E)$ is the algebraic theory of \emph{convex semilattices}, introduced in~\cite{BSV19arxiv}.

\begin{theorem}\label{thm:presentation}
$(\Sigma,E)$ is a presentation of the monad $C$.
\end{theorem}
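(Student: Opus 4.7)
The plan is to follow the recipe of Section~\ref{sec:recipe} in order to obtain a monad map $\iota\colon T_{\Sigma,E} \Rightarrow C$, and then to use Theorem~\ref{thm:uniquebase} to upgrade it to an isomorphism.

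\textbf{Step 1: algebra structure and monad map.}
For each set $X$, equip $CX$ with the operations
$S_1 \,\hat{\nplus}_X\, S_2 = \conv(S_1 \cup S_2)$ and
$S_1 \,\hat{+}_{p,X}\, S_2 = \{ p\, d_1 + (1-p)\, d_2 \mid d_1 \in S_1,\, d_2 \in S_2 \}$;
both outputs remain non-empty and finitely generated (using the finite bases of $S_1, S_2$). I would then check that $\hat{\nplus}$ and $\hat{+}_p$ satisfy $E_N$, $E_P$, and the distributivity axiom (D). The semilattice axioms for $\hat{\nplus}$ are immediate from properties of $\cup$ and $\conv$; the convex algebra axioms for $\hat{+}_p$ are the standard ones on $\Dis$ extended pointwise; (D) reduces to the elementary identity $p(\alpha d_x + (1-\alpha) d_y) + (1-p) d_z = \alpha(p d_x + (1-p) d_z) + (1-\alpha)(p d_y + (1-p) d_z)$. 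This is point (A); (B) and (C), saying that $Cf$ and $\mu^C$ commute with $\hat{\nplus}, \hat{+}_p$, are then routine verifications on convex combinations. Applying the theorem of Section~\ref{sec:recipe} gives a monad map $\iota\colon T_{\Sigma,E} \Rightarrow C$, which on generators satisfies $\iota_X([t_1 \nplus t_2]_E) = \iota_X([t_1]_E) \cup_{\conv} \iota_X([t_2]_E)$ and $\iota_X([t_1 +_p t_2]_E) = p \iota_X([t_1]_E) \hat{+}_p (1-p) \iota_X([t_2]_E)$, in particular mapping any purely probabilistic term to the singleton of its associated distribution (in accordance with \eqref{eq:iotap}).

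\textbf{Step 2: surjectivity.}
By definition of $CX$, any $S \in CX$ is of the form $\conv(\{d_1,\dots,d_n\})$ for distributions $d_i \in \Dis X$. Since $\iota^P$ presents $\Dis$, there exist $\Sigma_P$-terms $t_i$ with $\iota^P_X([t_i]_{E_P}) = d_i$. Then $\iota_X([t_1 \nplus \cdots \nplus t_n]_E) = S$.

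\textbf{Step 3: injectivity via unique bases.}
The main obstacle is to establish a normal form for terms that matches the notion of base from Section~\ref{sec:ubthm}. I would proceed as follows. First, using (D) repeatedly together with $(C_p),(I_p)$, every term in $T_\Sigma(X)$ can be rewritten modulo $E$ to the shape $s_1 \nplus \cdots \nplus s_k$ with each $s_i$ a pure $\Sigma_P$-term; the presentation $(\Sigma_P, E_P)$ of $\Dis$ then identifies each $s_i$ with a distribution $d_i$, and $(A),(C),(I)$ of $\Sigma_N$ turn the join into a finite set $\{d_1,\dots,d_k\}$. The crucial syntactic fact I need is the \emph{absorption law}: whenever $d$ is a convex combination of $d_1,\dots,d_k$, then $d_1 \nplus \cdots \nplus d_k \nplus d =_E d_1 \nplus \cdots \nplus d_k$. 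For the basic case $d = x +_p y$, $k = 2$, one first derives from $(I_p)$ applied to $x \nplus y$ together with (D), $(C_p)$ the identity $x \nplus y =_E x \nplus y \nplus (x +_p y) \nplus (y +_p x)$; joining $(x +_p y)$ on both sides and simplifying with $(I)$ collapses the RHS and yields $x \nplus y \nplus (x +_p y) =_E x \nplus y$. The general case follows by induction on the size of the convex combination, reducing an $n$-ary convex combination to nested binary ones via $(A_p)$. Applying absorption iteratively removes every generator that is a convex combination of the others, producing a normal form $d_1 \nplus \cdots \nplus d_m$ whose set of distributions $\{d_1,\dots,d_m\}$ is precisely a \emph{base} (in the sense of Section~\ref{sec:ubthm}) for $\iota_X([d_1 \nplus \cdots \nplus d_m]_E)$.

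\textbf{Step 4: conclusion.}
Suppose $\iota_X([t]_E) = \iota_X([t']_E)$. Bring $t, t'$ to normal form $d_1 \nplus \cdots \nplus d_m$ and $d'_1 \nplus \cdots \nplus d'_{m'}$ respectively. By Step 3, both $\{d_1,\dots,d_m\}$ and $\{d'_1,\dots,d'_{m'}\}$ are bases of the same convex set. By Theorem~\ref{thm:uniquebase}, $m = m'$ and they coincide as sets, so the two normal forms are equal modulo $(A),(C)$ of $\nplus$. Hence $[t]_E = [t']_E$, proving that $\iota_X$ is injective and therefore an isomorphism, which together with Step 1 establishes the presentation.
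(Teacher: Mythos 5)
Your proposal is correct and follows essentially the same route as the paper: the standard recipe of Section~\ref{sec:recipe} yields the monad map, terms are normalised to a nondeterministic--probabilistic form via the distributivity axiom, redundant summands are absorbed using the convexity law $t_1 \nplus t_2 =_E t_1 \nplus t_2 \nplus (t_1 +_p t_2)$, and Theorem~\ref{thm:uniquebase} matches the resulting normal forms. The only cosmetic differences are that the paper packages surjectivity and injectivity as an explicit two-sided inverse $\kappa$ built from unique bases, and that your formula $p\,\iota_X([t_1]_E)\,\hat{+}_p\,(1-p)\,\iota_X([t_2]_E)$ should simply read $\iota_X([t_1]_E)\,\hat{+}_p\,\iota_X([t_2]_E)$, since the Minkowski sum already carries the weights.
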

The above theorem has been proved in~\cite{BSV19arxiv}. In the remainder of this paper, we will provide an alternative proof of this fact by exploiting the unique base theorem (Theorem~\ref{thm:uniquebase}).

We begin by observing that the assignment $S \mapsto \conv(S)$ gives rise to a natural transformation, that we refer hereafter as 
\begin{equation}\label{eq:conv}
\conv \colon \Powne \Dis \Rightarrow C.
\end{equation}
Theorem~\ref{thm:uniquebase} provides a way of going backward, from $C$ to $\Powne \Dis$: we call  $UB_X\colon CX \to \Powne \Dis X$ the function assigning to each convex subset $S$ its unique base. However such $UB_X$ does not give rise to a natural transformation, in the sense that the following diagram does not commute 
\begin{equation}\label{eq:notnatural}\xymatrix{CX \ar[d]_{UB_X} \ar[r]^{C f} & CY \ar[d]^{UB_Y} \\
\Powne \Dis X \ar[r]_{ \Powne \Dis f}&  \Powne \Dis Y}
\end{equation}
for arbitrary function $f\colon X \to Y$. It holds that $UB_Y \after Cf \subseteq \Powne \Dis f \after UB_X$ but not the other way around, as shown by the next example.

\begin{example}
Let $X=\{x,y,z\}$, $Y=\{a,b\}$ and $f\colon X \to Y$ be the  function mapping both $x$ and $y$ to $a$ and $z$ to $b$. Consider the set $S=\{\frac{1}{2}x+\frac{1}{2}y, \, \frac{1}{2}x+\frac{1}{2}z,\, \delta_z \}$: this set is a base since none of its element can be expressed as convex combination of the others. However, the set $ \Powne \Dis f (S)=\{\delta_a, \, \frac{1}{2}a+\frac{1}{2}b, \, \delta_b \}$ is not a base since $\frac{1}{2}a+\frac{1}{2}b$ can be expressed as linear combination of $\delta_a$ and $\delta_b$. Now, by taking the convex set $\conv (S) \in CX$ one can easily see that $UB_Y \after Cf \not \supseteq \Powne \Dis f \after UB_X$. Indeed $\Powne \Dis f \after UB_X (\conv (S)) = \Powne \Dis f (S)= \{\delta_a, \, \frac{1}{2}a+\frac{1}{2}b, \, \delta_b \}$, while $UB_Y \after Cf (\conv(S))= \{\delta_a,\, \delta_b\}$ since $Cf (\conv(S)) = \conv(\Dis f(S))$ by Lemma~\ref{lem:ubf} below.
\end{example}

Interestingly enough, while \eqref{eq:notnatural} does not commute, the following diagram  does.
\begin{equation*}
\xymatrix{CX \ar[d]_{UB_X} \ar[r]^{C f} & CY  \\
 \Powne \Dis X \ar[r]_{ \Powne \Dis f}&  \Powne \Dis Y \ar[u]_{\conv_Y}}
\end{equation*}

This is closely related to Lemma 37 from~\cite{BSV19arxiv} which provides a slightly different formulation.
Below, we illustrate a proof: to simplify the notation of the natural transformations $\conv_X$ and $UB_X$ we avoid to specify the set $X$ whenever it is clear from the context.

\begin{lemma}\label{lem:ubf}
Let $S\in \cset(X)$ and $f:X\to Y$. Then 
$\cset f(S) 
= \conv (\{\dset f (d)\mid d \in UB(S)\})$.
\end{lemma}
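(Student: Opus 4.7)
The plan is a straightforward double inclusion, using only the fact that $UB(S)$ is a base of $S$ (Theorem~\ref{thm:uniquebase}) and that $\dset f$ is affine (i.e.\ preserves convex combinations). The affineness of $\dset f$ follows directly from its definition: for $d = \sum_i \alpha_i d_i$ one has, for any $y \in Y$, $\dset f(d)(y) = \sum_{x\in f^{-1}(y)} d(x) = \sum_i \alpha_i \sum_{x \in f^{-1}(y)} d_i(x) = \sum_i \alpha_i \dset f(d_i)(y)$.

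For the inclusion $\cset f(S) \subseteq \conv(\{\dset f(d) \mid d \in UB(S)\})$, pick any $d \in S$. Since $UB(S)$ is a base of $S$, we can write $d = \sum_i \alpha_i d_i$ with $d_i \in UB(S)$ and $(\alpha_i)_i$ a convex combination. Applying $\dset f$ and using affineness, $\dset f(d) = \sum_i \alpha_i \dset f(d_i)$, which displays $\dset f(d)$ as a convex combination of elements of $\{\dset f(d) \mid d \in UB(S)\}$, and hence as an element of its convex closure.

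For the reverse inclusion, note first that $\cset f(S) = \{\dset f(d) \mid d \in S\}$ is itself convex: given $\dset f(d), \dset f(d') \in \cset f(S)$ and $\alpha \in [0,1]$, convexity of $S$ gives $\alpha d + (1-\alpha) d' \in S$, and affineness of $\dset f$ yields $\alpha \dset f(d) + (1-\alpha)\dset f(d') = \dset f(\alpha d + (1-\alpha)d') \in \cset f(S)$. Since $UB(S) \subseteq S$, we have $\{\dset f(d) \mid d \in UB(S)\} \subseteq \cset f(S)$, and taking the convex closure of both sides while using that $\cset f(S)$ is already convex gives the desired containment.

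There is no real obstacle here; the only substantive ingredient is Theorem~\ref{thm:uniquebase}, which guarantees that $UB(S)$ is nonempty and generates $S$, together with the elementary observation that $\dset f$ is affine.
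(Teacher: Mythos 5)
Your proof is correct and follows essentially the same route as the paper: both directions reduce to the affineness of $\dset f$ together with the fact that $UB(S)$ generates $S$. The only cosmetic difference is that for the reverse inclusion you first establish that $\cset f(S)$ is convex and then absorb the convex hull, whereas the paper directly pulls a convex combination $\sum_i p_i \,\dset f(d_i)$ back to $\dset f(\sum_i p_i d_i)$ with $\sum_i p_i d_i \in S$; the underlying computation is identical.
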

\begin{proof}
We prove $\cset f(S) \subseteq \conv \bigcup_{\done \in UB(S)} \{\dset f (\done)\}$.
Let $\dtwo \in \cset f (S)$. 
Then $\dtwo= \dset f (\done)$ for some $\done \in S$, 
which implies that $\done$ is a convex combination of elements of $UB(S)$, that is,
$\done=\sum_{i} p_{i} \cdot \done_{i}$ with $\done_{i} \in UB(S)$ for all $i$.
Hence, $\dtwo= \sum_{i} p_{i} \cdot \dset f(\done_{i}) \in \conv \bigcup_{\done \in UB(S)} \{\dset f (\done)\}$.

For the opposite inclusion, let $\dtwo\in \conv \bigcup_{\done \in UB(S)} \
[\dset f (\done)\
]$. 
Hence, $\dtwo= \sum_{i} p_{i} \cdot \dset f(\done_{i})$ with $\done_{i} \in UB(S)$ for all $i$. 
We have $\sum_{i} p_{i} \cdot \dset f(\done_{i})=\dset f (\sum_{i} p_{i} \cdot \done_{i})$ and, 
by $\sum_{i} p_{i} \cdot \done_{i} \in S$, 
we conclude $\dtwo \in \cset f(S)$.

\end{proof}

%

\section{The monad map $\iota\colon T_{\Sigma,E} \Rightarrow C$}\label{secproofOne}

In this section we apply the standard recipe from Section \ref{sec:recipe} to construct a monad map $\iota\colon T_{\Sigma,E} \Rightarrow C$.

For this aim, we first recall two well-known operations on convex sets:
the convex union $\oplus \colon C(X) \times C(X) \to C(X)$ defined for all $S_1,S_2\in C(X)$ as $$S_1\oplus S_2=\conv(S_1 \cup S_2)$$
and, for all $p \in (0,1)$, the Minkowski sum $\oplus_p \colon C(X) \times C(X) \to C(X)$ defined as
$$S_1 \mplus_p S_2=\{d \,|\, d=pd_1+(1-p)d_2 \text{ for some } d_1\in S_1 \text{ and } d_2\in S_2\}\text{.}$$

Point (A) and (B) of the recipe are guaranteed by the following small result from~\cite[Lemma 38]{BSV19arxiv}.

\begin{lemma}\label{lem:CS-conv-semilattice}
	With the above defined operations $(CX, \oplus, +_p)$ is a convex semilattice. Moreover, for a map $f\colon X \to Y$, the map $Cf\colon CX \to CY$ is a convex semilattice homomorphism from $(CX, \oplus, +_p)$ to $(CY, \oplus, +_p)$. \qed
\end{lemma}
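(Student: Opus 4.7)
The proof is a direct verification of the five convex semilattice axioms together with a short calculation that $Cf$ commutes with both operations. I would begin by observing that $\oplus$ and $+_p$ preserve $CX$: if $S_1 = \conv\{d_1,\dots,d_n\}$ and $S_2 = \conv\{e_1,\dots,e_m\}$, then $S_1 \oplus S_2 = \conv\{d_1,\dots,d_n,e_1,\dots,e_m\}$ and $S_1 +_p S_2 = \conv\{p d_i + (1-p)e_j \mid 1 \le i \le n,\,1 \le j \le m\}$, so both are nonempty, convex, and finitely generated.

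For the semilattice axioms on $\oplus$, commutativity is immediate from $\cup$, and idempotence $S \oplus S = S$ holds because $S$ is already convex. Associativity reduces to the standard identity $\conv(\conv A \cup B) = \conv(A \cup B)$. For the convex algebra axioms on $+_p$, I would reduce to the corresponding axioms in $\Dis X$ (which hold because $\Dis$ satisfies $(\Sigma_P, E_P)$). Commutativity $(C_p)$ is immediate by swapping $d_1,d_2$. Idempotence $(I_p)$ holds because $d = pd + (1-p)d$ shows $S \subseteq S +_p S$, while convexity of $S$ gives the reverse inclusion. Associativity $(A_p)$ follows by expanding both sides: $(S_1 +_q S_2) +_p S_3$ consists of the elements $p(qd_1 + (1-q)d_2) + (1-p)d_3$, and $S_1 +_{pq}(S_2 +_{p(1-q)/(1-pq)} S_3)$ consists of elements $pq\,d_1 + (1-pq)\bigl(\tfrac{p(1-q)}{1-pq}d_2 + (1-\tfrac{p(1-q)}{1-pq})d_3\bigr)$; both simplify to the same set $\{pq\,d_1 + p(1-q)d_2 + (1-p)d_3 \mid d_i \in S_i\}$.

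The main content is distributivity $(D)$: $(S_1 \oplus S_2) +_p S_3 = (S_1 +_p S_3) \oplus (S_2 +_p S_3)$. For the $\supseteq$ inclusion, a typical element of the right-hand side is a convex combination $\sum_i \gamma_i(pe_i + (1-p)f_i)$ with $e_i \in S_1 \cup S_2$ and $f_i \in S_3$; using $\sum_i \gamma_i = 1$ this rewrites as $p\sum_i \gamma_i e_i + (1-p)\sum_i \gamma_i f_i$, whose first factor lies in $\conv(S_1 \cup S_2) = S_1 \oplus S_2$ and whose second lies in $\conv(S_3) = S_3$. For $\subseteq$, write $d \in S_1 \oplus S_2$ as $d = \sum_i \gamma_i e_i$ with $e_i \in S_1 \cup S_2$; then, again using $\sum_i \gamma_i = 1$, one has $pd + (1-p)d_3 = \sum_i \gamma_i(pe_i + (1-p)d_3)$, a convex combination of elements of $(S_1 +_p S_3) \cup (S_2 +_p S_3)$.

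Finally, the homomorphism property of $Cf$ is a consequence of the fact that $\Dis f$ is an affine map on $\Dis X$, so it commutes with finite convex combinations. This immediately yields $Cf(\conv(S_1 \cup S_2)) = \conv(Cf(S_1) \cup Cf(S_2))$ and $Cf(S_1 +_p S_2) = Cf(S_1) +_p Cf(S_2)$. The only step requiring any real bookkeeping is distributivity, where the key trick is using $\sum_i \gamma_i = 1$ to pull the fixed $(1-p)d_3$ inside or outside a convex sum; everything else is a mechanical unpacking of definitions.
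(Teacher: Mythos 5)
Your proof is correct: the direct axiom-by-axiom verification (closure of $CX$ under both operations, the semilattice and convex-algebra laws, distributivity via the $\sum_i \gamma_i = 1$ rearrangement, and the homomorphism property from affinity of $\Dis f$) is exactly the standard argument. The paper itself omits the proof, deferring to Lemma 38 of the cited earlier work, and your write-up is essentially the verification that citation stands in for, so there is nothing to add.
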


The following lemma proves point (C) explicitly, namely that $\mu$ is a $(\Sigma,E)$-homomorphism. Note that this is already (implicitly) proven in~\cite{BSV19arxiv}: There we first note that $(CX, \oplus, +_{p})$ is the free convex semilattice generated by $X$ and then prove that $\mu = \id_{CX}^\#$, see~\cite[Lemma 41]{BSV19arxiv}, which means that $\mu$ is the unique homomorphism (and hence certainly a homomorphism) from the free convex semilattice generated by $CX$ to the free convex semilattice generated by $X$ that extends the identity map on $CX$. 

\begin{lemma}\label{lem:muhom}
For all $S_1,S_2\in CC(X)$, it holds that:
\begin{enumerate}
\item $\mu (S_1 \oplus S_2)= \mu(S_1) \oplus \mu(S_2)$
\item $\mu (S_1 \mplus_p S_2)= \mu(S_1) \mplus_p \mu(S_2)$
\end{enumerate}
\end{lemma}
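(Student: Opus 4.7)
Both statements follow the same pattern: unfold the concrete formula for $\mu$, unfold the definition of $\oplus$ or $+_p$, and prove both inclusions directly, using the convexity of each $U \in \supp \Phi$ to recombine the ``choice of representative'' that $\mu$ makes in each $U$. Recall $\mu(S) = \bigcup_{\Phi \in S}\{\sum_{U\in\supp\Phi}\Phi(U)\cdot d_U \mid d_U\in U\}$, so an element of $\mu(S)$ is determined by a pair $(\Phi, (d_U)_U)$ with $\Phi\in S$ and $d_U\in U$ for each $U\in\supp\Phi$.

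For item (1), write $S_1\oplus S_2=\conv(S_1\cup S_2)$. For the inclusion $\mu(S_1\oplus S_2)\subseteq \mu(S_1)\oplus \mu(S_2)$, take $d\in\mu(S_1\oplus S_2)$ arising from $\Phi=\sum_i \alpha_i\Psi_i$ with $\Psi_i\in S_{k_i}$ (where $k_i\in\{1,2\}$) and choices $d_U\in U$ for $U\in\supp\Phi$. Swapping the two sums exhibits $d$ as $\sum_i \alpha_i e_i$ where $e_i=\sum_{U\in\supp\Psi_i}\Psi_i(U)\cdot d_U \in \mu(S_{k_i})$, so $d\in\conv(\mu(S_1)\cup\mu(S_2))=\mu(S_1)\oplus\mu(S_2)$. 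For the reverse inclusion, start from $d=\sum_i \alpha_i e_i$ with $e_i\in\mu(S_{k_i})$, say $e_i=\sum_U \Psi_i(U)\cdot d_{U,i}$. Set $\Phi=\sum_i\alpha_i\Psi_i \in S_1\oplus S_2$, and for each $U\in\supp\Phi$ take
$$d_U=\frac{1}{\Phi(U)}\sum_{i\,:\,\Psi_i(U)>0}\alpha_i\Psi_i(U)\cdot d_{U,i},$$
which lies in $U$ by convexity. A direct calculation then gives $\sum_U\Phi(U)\cdot d_U=d$, so $d\in\mu(S_1\oplus S_2)$.

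Item (2) is entirely analogous, and in fact simpler because $S_1+_p S_2$ consists only of the distributions $p\Phi_1+(1-p)\Phi_2$ with $\Phi_j\in S_j$, so the two-index bookkeeping collapses to a single pair. The forward direction expands $\Phi(U)=p\Phi_1(U)+(1-p)\Phi_2(U)$ under the outer sum of $\mu$ and splits off $pe_1+(1-p)e_2$ with $e_j\in\mu(\{\Phi_j\})\subseteq\mu(S_j)$. The reverse direction takes $d=pe_1+(1-p)e_2$ with $e_j=\sum_U\Phi_j(U)\cdot d_{U,j}$ and forms $\Phi=p\Phi_1+(1-p)\Phi_2\in S_1+_p S_2$ with the convex-combination representative $d_U=(p\Phi_1(U)\cdot d_{U,1}+(1-p)\Phi_2(U)\cdot d_{U,2})/\Phi(U)\in U$.

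The only genuinely delicate point, and the one worth spelling out carefully in the write-up, is this recombination step in the $\supseteq$ directions: given two families of choices $d_{U,1}, d_{U,2}\in U$ coming from distinct witnesses $\Phi_1,\Phi_2$, we need a single family $d_U\in U$ witnessing membership in $\mu(S_1\oplus S_2)$ (resp.\ $\mu(S_1+_p S_2)$). Convexity of each $U\in \cset X$ is exactly what makes the weighted average land in $U$; without it the equalities would fail. Everything else is rearranging finite double sums, so I would present the forward inclusions telegraphically and give the backward inclusions with the explicit formula for $d_U$ above.
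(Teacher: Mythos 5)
Your proof is correct and follows essentially the same route as the paper's: swap the finite double sums for the inclusions out of $\mu(S_1\oplus S_2)$ and $\mu(S_1\mplus_p S_2)$, and use convexity of each $U$ to merge the separate families of representatives into a single $d_U\in U$ for the reverse inclusions (this is exactly the paper's function $h$). The only cosmetic difference is that for $\mu(S_1)\oplus\mu(S_2)\subseteq\mu(S_1\oplus S_2)$ the paper argues via monotonicity of the $\mu$-formula in $S$ together with convexity of $\mu(S_1\oplus S_2)$, whereas you construct the witness $(\Phi,(d_U)_U)$ explicitly; both are valid.
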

\begin{proof}

\newcommand{\defi}{\mathrel{\overset{\makebox[0pt]{\mbox{\normalfont\tiny\sffamily def}}}{=}}}


Through this proof, we will often use the following key observation: $d\in \mu(S)$ iff
\begin{equation*}
\exists \Phi \in S \text{ such that } d=\sum_{U\in supp(\Phi)}\Phi(U)\cdot f(U) \text{ , for } f\colon supp(\Phi)\to \Dis(X) \text{ such that } f(U)\in U\text{.}
\end{equation*}

\begin{enumerate}
\item We first prove the inclusion $\mu(S_1) \oplus \mu(S_2)\subseteq \mu (S_1 \oplus S_2)$.

As $S_{1}\subseteq S_1 \oplus S_2$ we derive that
\begin{equation}\label{eq:mu11}
\mu(S_{1})\defi\bigcup_{\Phi \in S_{1}} \{\sum_{U \in \supp (\Phi)} \Phi(U) \cdot d \mid d\in U\} 
\subseteq \bigcup_{\Phi \in S_{1} \oplus S_{2}} \{\sum_{U \in \supp (\Phi)} \Phi(U) \cdot d \mid d\in U\}\defi \mu(S_{1} \oplus S_{2})
\end{equation}
Symmetrically, by $S_{2}\subseteq S_1 \oplus_p S_2$ we have
\begin{equation}\label{eq:mu12}
\mu(S_{2})\defi\bigcup_{\Phi \in S_{2}} \{\sum_{U \in \supp (\Phi)} \Phi(U) \cdot d \mid d\in U\} 
\subseteq \bigcup_{\Phi \in S_{1} \oplus S_{2}} \{\sum_{U \in \supp (\Phi)} \Phi(U) \cdot d \mid d\in U\} \defi \mu(S_{1} \oplus S_{2})
\end{equation}

Hence, 
\begin{align*}
\mu(S_1) \oplus \mu(S_2) &=\conv(\bigcup_{\Phi \in S_{1}} \{\sum_{U \in \supp (\Phi)} \Phi(U) \cdot d \mid d\in U\}  \cup \bigcup_{\Phi \in S_{2}} \{\sum_{U \in \supp (\Phi)} \Phi(U) \cdot d \mid d\in U\})\\
&\subseteq \conv(\mu(S_{1} \oplus S_{2})) \tag{by \ref{eq:mu11}, \ref{eq:mu12}}\\
&= \mu(S_{1} \oplus S_{2}) \tag{by $\mu(S_{1} \oplus S_{2})$ a convex set}
\end{align*}

We then prove the inclusion $\mu (S_1 \oplus S_2)\subseteq \mu(S_1) \oplus \mu(S_2)$.

Take $d\in \mu(S_1 \oplus S_2)$. Then there is a $\Phi\in S_1 \oplus S_2$ such that
$d=\sum_{U \in \supp (\Phi)} \Phi(U)\cdot f(U)$, with $f: \supp(\Phi) \to \dset(X)$ a function such that $f(U)\in U$.
As $\Phi$ is a convex combination of $(S_1 \cup S_2)$, we have $\Phi=\sum_{i} p_{i} \cdot \Phi_{i}$ with $\Phi_{i}\in (S_1 \cup S_2)$ for all $i$. 
Then for all $x\in X$ we have
\begin{align*}
\sum_{U \in \supp (\Phi)} \Phi(U)\cdot f(U)(x)
&= \sum_{U \in \cup_{i}\supp (\Phi_{i})} \big((\sum_{i} p_{i} \cdot \Phi_{i})(U)\cdot f(U)(x)\big)\\
&= \sum_{U \in \cup_{i}\supp (\Phi_{i})} \big(\sum_{i} p_{i} \cdot \Phi_{i}(U)\cdot f(U)(x)\big)\\
&= \sum_{i} p_{i} \cdot \big(\sum_{U \in \cup_{i}\supp (\Phi_{i})} \Phi_{i}(U)\cdot f(U)(x)\big)\\
&= \sum_{i} p_{i} \cdot \big(\sum_{U \in \supp (\Phi_{i})} \Phi_{i}(U)\cdot f(U)(x)\big)
\end{align*}
Hence, the result follows as 
\begin{align*}
d&=\sum_{i} p_{i} \cdot \big(\sum_{U \in \supp (\Phi_{i})} \Phi_{i}(U)\cdot f(U)\big)\\
&\in \conv (\bigcup_{\Phi \in (S_{1}\cup S_{2})} \{\sum_{U \in \supp (\Phi)} \Phi(U) \cdot d \mid d\in U\})\\
&= \mu(S_{1}\oplus S_{2})
\end{align*}

\item We first prove $ \mu(S_1) \mplus_p \mu(S_2) \subseteq \mu(S_1 \mplus_p S_2)$

Let $d\in \mu(S_1) \mplus_p \mu(S_2)$.
Then $d=
(\sum_{U \in \supp (\Phi_{1})} \Phi_{1}(U) \cdot f(U)) +_{p} (\sum_{U \in \supp (\Phi_{2})} \Phi_{2}(U) \cdot g(U))$ 
with $\Phi_{1} \in S_{1}, \Phi_{2} \in S_2$,  with $f: \supp(\Phi_{1}) \to \dset (X)$ such that $f(U)\in U$,  and with $g: \supp(\Phi_{2}) \to \dset (X)$ such that $g(U)\in U$.

We have that $d$ is equal to the probability distribution:
\[\sum_{U \in \supp (\Phi_{1}+_{p} \Phi_{2})} \big((\Phi_{1}+_{p} \Phi_{2})(U)\cdot h(U)\big)\]
with $h: \supp (\Phi_{1}+_{p} \Phi_{2}) \to \dset (X)$ defined as follows:
\[
h(U)=\begin{cases}
f(U) &\text{if $U \in (\supp (\Phi_{1})\setminus \supp(\Phi_{2}))$}\\
g(U) &\text{if $U \in (\supp (\Phi_{2})\setminus \supp(\Phi_{1}))$}\\
(f(U) +_{\frac{p\cdot \Phi_{1}(U)} {(\Phi_{1} +_{p} \Phi_{2})(U)}} g(U))
& \text{if $U\in (\supp (\Phi_{1})\cap \supp(\Phi_{2}))$}
\end{cases}
\]
To see this, take an $x\in X$. We have
\begin{align*}
d(x)&=\Big(\big(\sum_{U \in \supp (\Phi_{1})} \Phi_{1}(U) \cdot f(U)\big) +_{p} \big(\sum_{U \in \supp (\Phi_{2})} \Phi_{2}(U) \cdot g(U)\big) \Big) (x)\\
&=\big(\sum_{U \in \supp (\Phi_{1})} (p\cdot \Phi_{1}(U) \cdot f(U)(x))\big) + \big(\sum_{U \in \supp (\Phi_{2})} ((1-p)\cdot \Phi_{2}(U) \cdot g(U)(x))\big)\\
&=\big(\sum_{U \in \supp (\Phi_{1})\setminus \supp(\Phi_{2})}  (p\cdot \Phi_{1}(U) \cdot f(U)(x)) \big) \\
&\quad+
\big(\sum_{U \in \supp (\Phi_{2})\setminus \supp(\Phi_{1})}  ((1-p) \cdot \Phi_{2}(U) \cdot g(U)(x)) \big)\\
&\quad+
\Big(\sum_{U \in \supp (\Phi_{1})\cap \supp(\Phi_{2})}  
\big( ( p\cdot \Phi_{1}(U)  \cdot f(U)(x)) + ((1-p) \cdot \Phi_{2}(U) \cdot g(U)(x))\big)\Big) \\
&=\big(\sum_{U \in \supp (\Phi_{1})\setminus \supp(\Phi_{2})}  ((\Phi_{1}+_{p}\Phi_{2})(U) \cdot f(U)(x)) \big) \\
&\quad+
\big(\sum_{U \in \supp (\Phi_{2})\setminus \supp(\Phi_{1})}  ((\Phi_{1}+_{p}\Phi_{2})(U) \cdot g(U)(x)) \big)\\
&\quad+
\Big(\sum_{U \in \supp (\Phi_{1})\cap \supp(\Phi_{2})}  
\big( (\Phi_{1} +_{p} \Phi_{2})(U) \cdot (f(U)(x) +_{\frac{p\cdot \Phi_{1}(U)} {(\Phi_{1} +_{p} \Phi_{2})(U)}} g(U)(x))\big)\Big)
\tag{by $( p_{1}  \cdot q_{1}) + (p_{2} \cdot q_{2}) = (p_{1}  + p_{2}) \cdot (q_{1} +_{\frac{p_{1}} {p_{1}  + p_{2}}} q_{2})$, $\forall p_{1},p_{2},q_{1},q_{2}$}
\\
&= \sum_{U \in \supp (\Phi_{1}+_{p} \Phi_{2})} \big((\Phi_{1}+_{p} \Phi_{2})(U)\cdot h(U)(x)\big)
\end{align*}
Then, observe, that for every $U \in \supp (\Phi_{1}+_{p} \Phi_{2})$ we have $h(U) \in U$, since every $U$ is a convex set, and thus if $U$ contains $f(U)$ and $g(U)$ then it also contains $f(U) +_{q} g(U)$, for all $q$.
Thereby, we conclude 
\[d= \sum_{U \in \supp (\Phi_{1}+_{p} \Phi_{2})} \big((\Phi_{1}+_{p} \Phi_{2})(U)\cdot h(U)\big) \in \mu(S_{1}\mplus_{p} S_{2}).\]

We now prove the remaining inclusion, i.e., $\mu(S_1 \mplus_p S_2) \subseteq \mu(S_1) \mplus_p \mu(S_2)$

Let $\Phi\in S_1 \mplus_{p} S_2$ and let $d=\sum_{U \in \supp (\Phi)} \Phi(U)\cdot f(U)$, with $f:\supp(\Phi) \to \dset(X)$ such that $f(U) \in U$, be an element of $\mu(S_{1}\mplus_{p} S_{2})$.
Then, $\Phi=\Phi_{1} +_{p} \Phi_{2}$, with $\Phi_{1}\in S_{1},  \Phi_{2}\in  S_2$.
Then for every $x\in X$ we have
\begin{align*}
d (x)
&= \sum_{U \in \supp (\Phi_{1})\cup \supp(\Phi_{2})} ((\Phi_{1} +_{p} \Phi_{2})(U)\cdot f(U)(x))\\
&= \sum_{U \in \supp (\Phi_{1})\cup \supp(\Phi_{2})} ((p\cdot \Phi_{1}(U)\cdot f(U)(x)) + ((1-p)\cdot \Phi_{2}(U)\cdot f(U)(x)))\\
&= (\sum_{U \in \supp (\Phi_{1})} p\cdot \Phi_{1}(U)\cdot f(U)(x)) + (\sum_{U \in \supp (\Phi_{2})} (1-p)\cdot \Phi_{2}(U)\cdot f(U)(x))\\
&= (\sum_{U \in \supp (\Phi_{1})} \Phi_{1}(U)\cdot f(U)(x)) +_{p} (\sum_{U \in \supp (\Phi_{2})} \Phi_{2}(U)\cdot f(U)(x))
\end{align*}
which implies $d \in \mu(S_{1})\mplus_{p} \mu(S_{2})$.
\end{enumerate}

\end{proof}

In this way, we obtain a monad map $\iota\colon T_{\Sigma,E} \Rightarrow C$ defined as follows
$$\begin{array}{rcl}
\iota([x]_E)&=& \{\delta_x\}\\
\iota({[t_1\nplus t_2]_E}) & = &\iota([t_1]_E) \oplus \iota([t_2]_E)\\
\iota({[t_1+_pt_2]_E}) & = &\iota([t_1]_E)\mplus_p \iota([t_2]_E)
\end{array}$$
In the above definition, as well as in the remainder of the paper, we write $\iota$ in place of $\iota_X$ to simplify the notation.

Lemma \ref{lem:muhom}, together with the existence of unique bases, also allows us to derive a useful characterization of the multiplication $\mu$ of the monad $C$.

\begin{lemma}\label{lem:muub}
For $S\in CCX$,
$$\mu (S) = \conv\big(\bigcup_{\Phi \in UB(S)} \{ \sum_{U \in \supp \Phi} \Phi(U) \cdot d \mid d \in UB(U)\}\big).$$ 
\end{lemma}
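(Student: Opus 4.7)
The plan is to reduce the identity to two applications of Lemma~\ref{lem:muhom}, one per convex-semilattice operation, using the unique base theorem to rewrite each convex set as a concrete finite combination of singletons and of extreme distributions.

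First, I would handle the outer layer. Since $S = \conv(UB(S))$ and $UB(S) = \{\Phi_1,\dots,\Phi_n\}$ is finite, the definition of $\oplus$ gives
$S = \{\Phi_1\} \oplus \{\Phi_2\} \oplus \cdots \oplus \{\Phi_n\}$ in $CCX$.
Applying Lemma~\ref{lem:muhom}(1) iteratively then yields
\begin{equation*}
\mu(S) \;=\; \mu(\{\Phi_1\}) \oplus \cdots \oplus \mu(\{\Phi_n\}) \;=\; \conv\Big(\bigcup_{\Phi \in UB(S)} \mu(\{\Phi\})\Big).
\end{equation*}
So it suffices to show, for every $\Phi \in UB(S)$, that
$\mu(\{\Phi\}) = \conv\bigl(\{\sum_{U \in \supp \Phi} \Phi(U)\cdot d \mid d \in UB(U)\}\bigr)$,
after which the stated equality follows because $\conv(\bigcup_i \conv(A_i)) = \conv(\bigcup_i A_i)$.

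Second, to handle $\mu(\{\Phi\})$, I would enumerate $\supp \Phi = \{U_1,\dots,U_k\}$ with weights $p_i = \Phi(U_i)$ and observe that the singleton $\{\Phi\}$ can be written as a nested $+_p$-combination of the singletons $\{\delta_{U_i}\}$ in $CCX$. Since $\mu(\{\delta_{U_i}\}) = U_i$ directly from the formula for $\mu$, Lemma~\ref{lem:muhom}(2) applied iteratively gives $\mu(\{\Phi\})$ as the corresponding nested Minkowski combination of the $U_i$'s, i.e.\ $\mu(\{\Phi\}) = \{\sum_i p_i d_i \mid d_i \in U_i\}$. Now replace each $U_i$ by $\conv(UB(U_i))$ and use the standard fact that convex hull commutes with Minkowski sum and nonnegative scaling, $\conv(A) \oplus_p \conv(B) = \conv(A \oplus_p B)$, to conclude
\begin{equation*}
\mu(\{\Phi\}) \;=\; \conv\Bigl(\Bigl\{\sum_{U \in \supp \Phi} \Phi(U)\cdot d \;\Big|\; d \in UB(U)\Bigr\}\Bigr).
\end{equation*}

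The only nontrivial step is the commutation of $\conv$ with the Minkowski operation $\oplus_p$; I expect this to be the main place to be careful. One can prove it directly: the inclusion $\conv(A \oplus_p B) \subseteq \conv(A) \oplus_p \conv(B)$ is immediate from $A \oplus_p B \subseteq \conv(A) \oplus_p \conv(B)$ together with convexity of the right-hand side, and the converse follows by writing arbitrary $a = \sum_j \alpha_j a_j \in \conv(A)$ and $b = \sum_k \beta_k b_k \in \conv(B)$ and using the product distribution $(\alpha_j \beta_k)_{j,k}$ to exhibit $pa + (1-p)b$ as a convex combination of elements of $A \oplus_p B$. This is exactly the product-of-coefficients trick already used implicitly in the proof of Lemma~\ref{lem:muhom}(2), so no genuinely new work is required.
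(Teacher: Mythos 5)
Your proof is correct and follows essentially the same route as the paper's: decompose $S$ over its unique base, apply Lemma~\ref{lem:muhom}(1) to pull $\mu$ inside the convex union, identify $\mu(\{\Phi\})$ as a Minkowski combination of the sets $U \in \supp\Phi$, and then commute $\conv$ with the Minkowski operation (via the same product-of-coefficients argument) to replace each $U$ by $UB(U)$. The only cosmetic difference is that you derive $\mu(\{\Phi\}) = \{\sum_i p_i d_i \mid d_i \in U_i\}$ by decomposing $\{\Phi\}$ into $+_p$-combinations of $\{\delta_{U_i}\}$ and invoking Lemma~\ref{lem:muhom}(2), whereas the paper reads this identity off directly from the concrete formula for $\mu$; both are fine.
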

\begin{proof}
We have $S=\conv (\bigcup_{\Phi \in UB(S)} \{\Phi\})$ which means that $S$ is a convex union of the sets $\{\Phi\}$, for $\Phi \in UB(S)$.
Then by Lemma \ref{lem:muhom} we derive
$\mu(S)=\conv (\bigcup_{\Phi \in UB(S)} \mu\{\Phi\})$. 
By definition, $\mu\{\Phi\} = \{ \sum_{U \in \supp (\Phi)} \Phi(U)\cdot d \mid d \in U\}$, hence
\begin{equation}\label{eq:muub1}
\mu(S)=\conv \big(\bigcup_{\Phi \in UB(S)} \{ \sum_{U \in \supp (\Phi)} \Phi(U)\cdot d \mid d \in U\}\big).
\end{equation}

Observe that the
Minkowski sum operation, which can be equivalently defined on arbitrary sets (i.e., not convex) of distributions, enjoys the following property:
\begin{equation}\label{eq:conv-mink}
\text{for any $S,T\subseteq X$, } \;\conv(S) \mplus_{p} \conv (T) = \conv(S \mplus_{p} T).
\end{equation}
Indeed, $S \mplus_{p} T\subseteq \conv(S) \mplus_{p} \conv (T)$, and as the Minkowski sum of convex sets is convex we have 
$\conv(S \mplus_{p} T)
\subseteq \conv(\conv(S) \mplus_{p} \conv (T))
=\conv(S) \mplus_{p} \conv (T).$
For the other direction, take 
$p(\sum_{i} p_{i} x_{i}) + (1-p)(\sum_{j} q_{j} y_{j}) \in \conv(S) \mplus_{p} \conv (T)$. We have:
\[p(\sum_{i} p_{i} x_{i}) + (1-p)(\sum_{j} q_{j} y_{j}) 
= p(\sum_{i,j} (p_{i} q_{j}) x_{i}) +(1-p) (\sum_{j} (p_{i} q_{j}) y_{j})
= \sum_{i,j} (p_{i} q_{j}) (p x_{i} + (1-p) y_{j})\]
which is then an element of  $\conv(S \mplus_{p} T)$. This proves (\ref{eq:conv-mink}).

For every $\Phi$, the set $\{ \sum_{U \in \supp (\Phi)} \Phi(U)\cdot d \mid d \in U\}$ is a Minkowski sum over the elements $U$ of $\supp(\Phi)$, which are themselves convex sets satisfying $U= \conv(UB(U))$.
Then by (\ref{eq:conv-mink}) we derive:
\begin{equation}\label{eq:muub2}
\{ \sum_{U \in \supp (\Phi)} \Phi(U)\cdot d \mid d \in U\}= \conv (\{ \sum_{U \in \supp (\Phi)} \Phi(U)\cdot d \mid d \in UB(U)\}).
\end{equation}
By (\ref{eq:muub1}) and (\ref{eq:muub2}) it holds: 
\[\mu(S)=\conv \big(\bigcup_{\Phi \in UB(S)}  \conv (\{ \sum_{U \in \supp (\Phi)} \Phi(U)\cdot d \mid d \in UB(U)\})\big).\]
As shown in the proof of \cite[Lemma 38]{BSV19arxiv}, we have:
\[
\text{for any $S,T\subseteq X$, } \;\conv(\conv(S) \cup T) = \conv(S \cup T).
\]
Hence, we derive:
\begin{align*}
\conv \big(\bigcup_{\Phi \in UB(S)}  \conv (\{ \sum_{U \in \supp (\Phi)} \Phi(U)\cdot d \mid d \in UB(U)\})\big)\\
\quad\quad= \conv \big(\bigcup_{\Phi \in UB(S)}  \{ \sum_{U \in \supp (\Phi)} \Phi(U)\cdot d \mid d \in UB(U)\}\big).
\end{align*}
%
%
%
%

\end{proof}

\section{Proving the isomorphism}\label{sec:inverse}
So far we have constructed a monad map $\iota\colon T_{\Sigma,E} \Rightarrow C$.
In this section, we prove that such map is an isomorphism by exploting Theorem \ref{thm:uniquebase}.

We start with a simple observation: for each set $X$, there is a trivial injection $i_X\colon T_{\Sigma_P}(X) \to T_{\Sigma}(X)$. A term  in $T_{\Sigma}$ is said to be a \emph{purely probabilistic term} (p-term, for short) iff it lays in the image of $i$. Since two $p$-terms are equal in $E$ iff they are also equal in $E_P$, then there is also an injection from $T_{\Sigma_P,E_P}(X)$ to $T_{\Sigma,E}(X)$. We overload the notation and denote it also with $i_X\colon T_{\Sigma_P,E_P}(X) \to T_{\Sigma,E}(X)$.

\begin{lemma}\label{prop:switch}
Let $\{-\}_X \colon  \mathcal{D}(X) \rightarrow C(X)$ be the function  mapping every distribution $d$ into the convex set $\{d\}$. The following diagram commutes.
$$\xymatrix{
T_{\Sigma_P,E_P }X \ar[r]^{i_X} \ar[d]_{\iota_X^P}  & T_{\Sigma,E}X \ar[d]^{\iota_X}\\
\mathcal{D}X \ar[r]_{\{-\}_X} & CX
}$$
\end{lemma}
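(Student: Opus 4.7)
The natural approach is structural induction on the $p$-term $t \in T_{\Sigma_P}X$ whose equivalence class we start with: I would show that for every such $t$,
\[
\iota_X\bigl([i(t)]_E\bigr) \;=\; \bigl\{\iota^P_X([t]_{E_P})\bigr\}.
\]
This is well-defined on $E_P$-equivalence classes because both $\iota^P_X$ and $\iota_X$ are, and because $E_P \subseteq E$ ensures $i_X$ lifts to equivalence classes. So a pointwise verification on representatives $t \in T_{\Sigma_P}X$ suffices.

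For the base case $t = x$, the definitions give $\iota_X([x]_E) = \{\delta_x\}$ and $\{-\}_X \circ \iota^P_X([x]_{E_P}) = \{\delta_x\}$, so both sides coincide. For the inductive case $t = t_1 +_p t_2$, apply the recursive clauses of $\iota$ and $\iota^P$ recalled at the end of Section~\ref{sec:recipe} and Section~\ref{secproofOne}:
\[
\iota_X\bigl([i(t_1) +_p i(t_2)]_E\bigr) \;=\; \iota_X([i(t_1)]_E) \mplus_p \iota_X([i(t_2)]_E),
\]
\[
\iota^P_X\bigl([t_1 +_p t_2]_{E_P}\bigr) \;=\; p\cdot\iota^P_X([t_1]_{E_P}) + (1-p)\cdot\iota^P_X([t_2]_{E_P}).
\]
By the induction hypothesis, each $\iota_X([i(t_j)]_E)$ is the singleton $\{\iota^P_X([t_j]_{E_P})\}$, so the left-hand side reduces to a Minkowski sum of two singletons.

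The only step that needs a small observation is the trivial identity
\[
\{d_1\} \mplus_p \{d_2\} \;=\; \bigl\{p\cdot d_1 + (1-p)\cdot d_2\bigr\},
\]
which is immediate from the definition of $\mplus_p$ given in Section~\ref{secproofOne}. Plugging in $d_j = \iota^P_X([t_j]_{E_P})$ matches the right-hand side of the target equation, closing the induction. I do not anticipate any real obstacle here; the proof is a routine unfolding, and the key conceptual point is simply that $i_X$ embeds the purely probabilistic fragment and that on this fragment $\iota$ produces only singletons, since the nondeterministic operation $\oplus$ never occurs.
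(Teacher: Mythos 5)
Your proof is correct and follows essentially the same route as the paper's: structural induction on the $p$-term $t$, with the base case $t=x$ giving $\{\delta_x\}$ on both sides and the inductive case $t=t_1+_pt_2$ reduced to the identity $\{d_1\}\mplus_p\{d_2\}=\{p\cdot d_1+(1-p)\cdot d_2\}$. The remark on well-definedness over $E_P$-equivalence classes is a slight addition of care over the paper's version but does not change the argument.
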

\begin{proof}
We prove by induction that $\{ \iota_X^P([t]_{E_P}) \}_X = \iota_X(i_X(([t]_{E_P})))$  for all $t \in T_{\Sigma_P}$.
If $t=x\in X$, then $\{ \iota_X^P([x]_{E_P}) \}_X =\{ \delta_x \}= \iota_X([x]_E  )=\iota_X(i_X(([t]_{E_P})))$. If $t=t_1+_p t_2$, then 
\begin{align*}
\{ \iota_X^P([t_1+_p t_2]_{E_P}) \}_X & = \{p\cdot \iota_X^P([t_1]_{E_P})+ (1-p)\cdot \iota_X^P([t_1]_{E_P}) \}\\
& = \{i_X^P([t_1]_{E_P})\} \mplus_p \{i_X([t_2]_{E_{P}})\}\\
& = \iota_X(i_X([t_1]_{E_P})) \mplus_p \iota_X(i_X([t_2]_{E_P}))\\
& = \iota_X([t_1]_E) \mplus_p \iota_X([t_2]_{E})\\
& = \iota_X([t_1+_pt_2]_{E})\\
& = \iota_X(i_X(([t_1+_pt_2]_{E_P})))
\end{align*}
\end{proof}

Recall that the monap map $\iota^P \colon T_{\Sigma_P,E_P} \Rightarrow \Dis$ defined in \eqref{eq:iotap} is an isomorphism.
We call $\kappa^P \colon D \Rightarrow  T_{\Sigma_P,E_P}$ its inverse. By exploiting $\kappa^P$ and Theorem \ref{thm:uniquebase}, it is easy to define a function $\kappa_X \colon C(X) \to  T_{\Sigma, E}(X)$ as follows: for $S\in C(X)$ with base $\{d_1, \dots, d_n\}$
\begin{equation}\kappa_X(S)=[\, i(\kappa^P(d_1))\nplus \dots \nplus i(\kappa^P(d_n))\,]_E\text{.}\end{equation}

\begin{proposition}\label{prop:inverse1}
$\iota \after \kappa = \id_C$
\end{proposition}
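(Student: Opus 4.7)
The plan is to unfold the definitions of $\kappa$ and $\iota$ on an arbitrary $S \in CX$ and show that they compose to the identity by tracking the unique base through each step. Concretely, I start with $S \in C(X)$ and invoke Theorem~\ref{thm:uniquebase} to obtain its unique base $\{d_1, \dots, d_n\}$, so by definition
$$\kappa(S) = [\,i(\kappa^P(d_1)) \nplus \cdots \nplus i(\kappa^P(d_n))\,]_E.$$
Applying the inductive clause of $\iota$ on the binary symbol $\nplus$ (using that $\oplus$ on $CX$ is associative up to the axioms in $E_N$, which it is by Lemma~\ref{lem:CS-conv-semilattice}), I rewrite
$$\iota(\kappa(S)) = \iota([i(\kappa^P(d_1))]_E) \oplus \cdots \oplus \iota([i(\kappa^P(d_n))]_E).$$

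The next step is to evaluate each summand $\iota([i(\kappa^P(d_i))]_E)$. This is exactly the content of Lemma~\ref{prop:switch}, which tells us that composing $i$ and $\iota$ on a purely probabilistic term agrees with singleton-lifting the probabilistic interpretation, namely $\iota \after i = \{-\} \after \iota^P$. Together with the fact that $\kappa^P$ is the inverse of $\iota^P$, this gives
$$\iota([i(\kappa^P(d_i))]_E) = \{\iota^P(\kappa^P(d_i))\} = \{d_i\}.$$

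The final step is purely set-theoretic: by the definition of $\oplus$ on $CX$ as $\conv$ of the union, I conclude
$$\iota(\kappa(S)) = \{d_1\} \oplus \cdots \oplus \{d_n\} = \conv(\{d_1, \dots, d_n\}) = S,$$
the last equality being the defining property of a base.

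I do not expect any real obstacles here, since all the real work has been done earlier: the existence of a unique base (Theorem~\ref{thm:uniquebase}) is what makes $\kappa$ well-defined in the first place, and the commutation of probabilistic terms with $\iota$ is handled by Lemma~\ref{prop:switch}. The only mildly subtle point worth mentioning explicitly is that the $n$-ary convex union $\{d_1\} \oplus \cdots \oplus \{d_n\}$ is well-defined (associativity and commutativity of $\oplus$ from Lemma~\ref{lem:CS-conv-semilattice}) and equals $\conv(\{d_1, \dots, d_n\})$; this just unfolds the definition of $\oplus$ and uses idempotency on the trivial convex hulls $\{d_i\} = \conv(\{d_i\})$.
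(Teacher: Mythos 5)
Your proof is correct and follows essentially the same route as the paper's: unfold $\kappa$ on the unique base, push $\iota$ through the $\nplus$'s, apply Lemma~\ref{prop:switch} together with $\iota^P \after \kappa^P = \id$ to turn each summand into $\{d_i\}$, and conclude by $\conv(\{d_1,\dots,d_n\}) = S$. The extra remarks on associativity of $\oplus$ and well-definedness of the $n$-ary convex union are fine but not points the paper dwells on.
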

\begin{proof}
Let $S\in C(X)$ be a convex set with base $\{d_1,\dots, d_n\}$. By definition $$\kappa(S)=[i(\kappa^P(d_1))\nplus \dots \nplus i(\kappa^P(d_n))]_E$$ and $$\iota(\kappa(S))=\iota([i(\kappa^P(d_1))]_E)\oplus \dots \oplus \iota([i(\kappa^P(d_n))]_E).$$ By Lemma \ref{prop:switch}, $\iota([\kappa(S)]_E)=\{d_1\} \oplus \dots \oplus \{d_n\}$ which is exactly $S$.
\end{proof}

We are now left to prove that $\kappa  \after \iota = \id_{T_{\Sigma,E}}$. This means that that any term $t$ is in the equivalence class of $\kappa  \after \iota ([t]_E)$, which by definition of $\kappa$ is $[i(\kappa^P(d_1))\nplus\dots \nplus i(\kappa^P(d_n))]_E$ where $\{d_1, \dots, d_n\}$ is the base for the space $\iota([t]_E)$.

\medskip

The first step consists in showing that every term is equivalent, modulo $E$, with a term of a certain shape: a term $t\in T_\Sigma(X)$ is said to be in \emph{nondeterministic-probablistic form}, \emph{n-p form} for short, if there exists $t_1, \dots, t_n\in T_{\Sigma_P}(X)$ such that $t=i(t_1) \nplus \dots \nplus i(t_n)$. This can be thought of as an analogous of the disjunctive-conjunctive form that is commonly used in propositional logic.

\begin{example}
The term $(x\nplus y)+_{\frac{1}{2}}(y+_{\frac{1}{3}} z)$ is not in n-p form, since $x\nplus y$ occurs inside $+_{\frac{1}{2}}$. However, by using the distributivity axiom $(D)$, we have that $(x\nplus y)+_{\frac{1}{2}}(y+_{\frac{1}{3}} z) =_E (x+_{\frac{1}{2}} (y+_{\frac{1}{3}} z)) \nplus (y+_{\frac{1}{2}} (y+_{\frac{1}{3}} z))$ which is in n-p form.
\end{example}
The following proposition ensures that every term is equivalent through $E$ to one in n-p form.

\begin{proposition}\label{prop:npform} 
For all $t\in T_{\Sigma}(X)$, there exists $t'$ in n-p form such that $t=_Et'$.
\end{proposition}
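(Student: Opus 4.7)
The plan is to proceed by structural induction on $t \in T_\Sigma(X)$, with the only nontrivial case being the probabilistic case, handled by iterated application of distributivity $(D)$.

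For the base case, a variable $x \in X$ is already in n-p form, witnessed by $n=1$ and $t_1 = x$. For the nondeterministic case $t = t_1 \nplus t_2$, the induction hypothesis yields $s_1 = i(u_1) \nplus \dots \nplus i(u_n)$ and $s_2 = i(v_1) \nplus \dots \nplus i(v_m)$ in n-p form with $t_k =_E s_k$; then $t =_E s_1 \nplus s_2$, which is itself in n-p form (using associativity $(A)$ to reparenthesise if one insists on a specific bracketing convention).

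The real work is the probabilistic case $t = t_1 +_p t_2$. By induction we again have $t_k =_E s_k$ with $s_k$ in n-p form, and it suffices to show the following auxiliary claim: for any two n-p form terms $s_1 = i(u_1) \nplus \dots \nplus i(u_n)$ and $s_2 = i(v_1) \nplus \dots \nplus i(v_m)$,
\[
s_1 +_p s_2 \;=_E\; \bigplus_{i=1}^{n}\bigplus_{j=1}^{m} i(u_i +_p v_j),
\]
which is in n-p form because each $u_i +_p v_j \in T_{\Sigma_P}(X)$, so $i(u_i +_p v_j) = i(u_i) +_p i(v_j)$. I would prove this claim by a double induction: outer induction on $n$ using the distributivity axiom $(D)$ directly to peel off summands of $s_1$, namely $(s_1' \nplus i(u_n)) +_p s_2 =_E (s_1' +_p s_2) \nplus (i(u_n) +_p s_2)$; inner induction on $m$ using the symmetric variant of $(D)$ that follows from combining $(D)$ with commutativity $(C_p)$, namely $z +_p (x \nplus y) =_E (z +_p x) \nplus (z +_p y)$, to reduce $i(u_n) +_p s_2$ to single-p-term summands.

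The main obstacle is purely bookkeeping: the right-distributivity law needed for the inner induction is not among the axioms in $E$ and must be derived first. I would isolate this as a small preliminary observation, deriving $z +_p (x\nplus y) =_E (z+_p x)\nplus (z+_p y)$ from $(C_p)$ and $(D)$ as
\[
z +_p (x\nplus y) \;=_E\; (x \nplus y) +_{1-p} z \;=_E\; (x +_{1-p} z) \nplus (y +_{1-p} z) \;=_E\; (z +_p x) \nplus (z +_p y),
\]
after which the double induction on the list lengths proceeds without surprises, and the structural induction on $t$ concludes the proof. Up to associativity and commutativity of $\nplus$ (axioms $(A)$ and $(C)$ in $E_N$), one may leave the order of the resulting $nm$-ary $\nplus$ unspecified.
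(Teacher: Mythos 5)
Your proof is correct, but it takes a different route from the paper's. The paper does not use structural induction: it sets up a two-rule term rewriting system, $(t_1\nplus t_2) +_p t_3 \rightsquigarrow (t_1+_p t_3) \nplus (t_2 +_p t_3)$ and $t_1+_p (t_2 \nplus t_3) \rightsquigarrow (t_1+_p t_2) \nplus (t_1 +_p t_3)$, observes that both rules preserve $=_E$ (the second via exactly the $(C_p)$--$(D)$--$(C_p)$ derivation you give for right-distributivity), proves termination by a recursive path ordering with the precedence $+_p > \nplus$ following Dershowitz, and notes that the normal forms are precisely the terms in n-p form. Your argument replaces the termination machinery with a direct structural induction on $t$, where the only delicate point --- that in the case $t = t_1 +_p t_2$ the term $s_1 +_p s_2$ is not a subterm of $t$ and so cannot be handled by the outer induction hypothesis --- is correctly resolved by your self-contained auxiliary claim, proved by a double induction on the lengths $n$ and $m$ of the two n-p forms. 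Both proofs rest on the same algebraic content (axiom $(D)$ plus its commuted variant); yours is more elementary and explicitly constructive, producing the $nm$-summand normal form $\bigplus_{i,j} i(u_i +_p v_j)$ without appeal to external termination theorems, while the paper's rewriting formulation is more modular and characterises the n-p forms as the normal forms of a terminating system, which is reusable information. Either is a complete proof of the proposition.
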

\begin{proof}
Intuitively, by virtue of the axiom $(D)$ all the occurrences of $+_p$ can be pushed inside some $\nplus$.  This can be proved formally by means of the following term rewriting system.
$$ (t_1\nplus t_2) +_p t_3 \rightsquigarrow (t_1+_p t_3) \nplus  (t_2 +_p t_3) \qquad t_1+_p (t_2 \nplus t_3) \rightsquigarrow (t_1+_p t_2) \nplus (t_1 +_p t_3) $$
If $t\in T_\Sigma(X)$ rewrites to $t'\in T_\Sigma(X)$, then $t=_E t'$ since the left rule is just the axiom (D), while the right can be derived using $(C_p)$, $(D)$ and $(C_p)$ again. 

Using standard term rewriting techniques from \cite{dershowitz1982orderings} we can prove that the rewriting system terminates: 
\begin{itemize}
	\item[(1)] Define the partial order $+_p > +$ on $\Sigma$; 
	\item[(2)] Observe that the generated recursive path ordering on $T_{\Sigma}(X)$ is a simplification ordering (see e.g., Example A in Section 5 of \cite{dershowitz1982orderings});
	\item[(3)] Conclude by the First Termination Theorem.
\end{itemize}

Finally, we observe that a term $t$ is in n-p form iff  $t\not \rightsquigarrow$: Indeed, if $t$ is in n-p form then there is no redex for the two rules above. On the other hand, if $t$ is not in n-p form, then some $+_p$ should occur inside a $\nplus$ and then one of the rules applies.

Therefore, each term $t$ can be rewritten into an $E$-equivalent term $t'$ in n-p form.
\end{proof}

Given  a term $t'\in T_\Sigma (X)$ in n-p form and $t_1, \dots, t_n\in T_{\Sigma_P}(X)$ such that $t'=i(t_1) \nplus \dots \nplus i(t_n)$, one would like $\{ \iota^P([t_1]_E), \dots, \iota^P([t_n)]_E\}$ to be the base for $\iota([t']_E)$. But this is not always the case since some $\iota^P([t_i]_E)$ can be in the convex combination of the other $\iota^P([t_j]_E)$.


\begin{example}
The term $(x+_{\frac{1}{2}}y)\nplus (x+_{\frac{2}{3}}(x\nplus y))$ is not in n-p form. By applying the rewriting procedure in the proof of Proposition \ref{prop:npform} one obtains: $(x+_{\frac{1}{2}}y)\nplus (x+_{\frac{2}{3}}(x+y))=_E (x+_{\frac{1}{2}}y)\nplus (x+_{\frac{2}{3}}x) \nplus (x+_{\frac{2}{3}}y)$. Observe that this is equivalent to $(x+_{\frac{1}{2}}y)\nplus x \nplus (x+_{\frac{2}{3}}y)$. The convex set $\iota((x+_{\frac{1}{2}}y)\nplus x \nplus (x+_{\frac{2}{3}}y))$ has base $\{\iota^P([x+_{\frac{1}{2}}y]_P, \iota_P([x]_P))\}=\{ \frac{1}{2}x + \frac{1}{2}y, \delta_x \}$. Indeed  the distribution $\iota^P(x+_{\frac{2}{3}}y)= \frac{2}{3}x + \frac{1}{3}y $ is a convex combination of $\{ \frac{1}{2}x + \frac{1}{2}y, \delta_x \}$ as $\frac{2}{3}x + \frac{1}{3}y = \frac{2}{3}(\frac{1}{2}x + \frac{1}{2}y) + \frac{1}{3} \delta_x$. 
\end{example}

The next three lemmas are necessary to show that, using the axioms in $E$, we can remove from $t'$ those summands $i(t_i)$ such that 
$\iota^P([t_i]_E)$ is in the convex combination of the other $\iota^P([t_j]_E)$. These are again partly from~\cite{BSV19arxiv}.

\begin{lemma}[Convexity law]\label{lemma:convexitylaw}
For all terms $t_1,t_2\in T_{\Sigma}(X)$, for all $p\in(0,1)$, $$t_1 \nplus t_2 =_E t_1 \nplus t_2 \nplus  (t_1 +_p t_1)\text{.}$$
\end{lemma}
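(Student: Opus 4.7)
My plan is a short rewrite along the congruence $=_E$, exploiting that the right-hand side contains a subterm of the form $t_1 +_p t_1$ which is immediately reducible. First, I would invoke the probabilistic idempotency axiom $(I_p) \in E_P$, which gives $t_1 +_p t_1 =_E t_1$. Since $=_E$ is a congruence (being generated by the equations in $E$), this rewrite can be performed inside the context $t_1 \nplus t_2 \nplus [\,\cdot\,]$, yielding
$$
t_1 \nplus t_2 \nplus (t_1 +_p t_1) \;=_E\; t_1 \nplus t_2 \nplus t_1.
$$

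Second, I would finish with the semilattice axioms from $E_N$: commutativity $(C)$ and associativity $(A)$ allow me to regroup the right-hand side as $t_2 \nplus (t_1 \nplus t_1)$, and idempotency $(I)$ then collapses $t_1 \nplus t_1$ to $t_1$, giving $t_1 \nplus t_2 \nplus t_1 =_E t_1 \nplus t_2$. Composing the two equalities yields the statement. This is an entirely routine $E$-equational calculation that does not require the distributivity axiom $(D)$ at all and presents no real obstacle.

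I note, however, that the name \emph{Convexity law} together with the stated purpose of the next three lemmas (removing a summand $i(t_i)$ whose image $\iota^P([t_i]_E)$ is a convex combination of the images of the other summands) strongly suggests the intended statement is $t_1 \nplus t_2 =_E t_1 \nplus t_2 \nplus (t_1 +_p t_2)$. For that substantive reading the plan would instead be: apply $(I_p)$ \emph{backward} to rewrite $t_1 \nplus t_2$ as $(t_1 \nplus t_2) +_p (t_1 \nplus t_2)$, then expand by two applications of $(D)$ (using $(C_p)$ to get the symmetric form distributing $+_p$ on the left), and simplify diagonal terms via $(I_p)$ and $(C_p)$. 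This produces the weaker identity
$$
t_1 \nplus t_2 \;=_E\; t_1 \nplus t_2 \nplus (t_1 +_p t_2) \nplus (t_1 +_{1-p} t_2),
$$
which already gives the result for $p = 1/2$ via $(I)$. The main obstacle in that reading would be eliminating the spurious $(t_1 +_{1-p} t_2)$ summand for $p \neq 1/2$, for which I would attempt a more delicate derivation using the associativity axiom $(A_p)$ and iterated application of the identity just derived, along the lines of~\cite{BSV19arxiv}.
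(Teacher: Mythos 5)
You are right that the statement as printed is a typo: the paper's own proof, and the way the lemma is used in Lemma~\ref{lemma:removingconv} (where it must adjoin the summand $i(t_1 +_{p_1} t_2)$), make clear that the intended identity is $t_1 \nplus t_2 =_E t_1 \nplus t_2 \nplus (t_1 +_p t_2)$. Your one-line proof of the literal statement via $(I_p)$ and semilattice idempotency is correct but vacuous, so the substance lies entirely in your second reading. There, your first step coincides exactly with the paper's: starting from $(I_p)$ and distributing twice with $(D)$ (using $(C_p)$ for the left-hand distribution), one obtains
$$t_1 \nplus t_2 \;=_E\; t_1 \nplus (t_2 +_p t_1) \nplus (t_1 +_p t_2) \nplus t_2, \qquad (*)$$
which is precisely your identity, since $(t_1 +_{1-p} t_2) =_E (t_2 +_p t_1)$ by $(C_p)$.

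The gap is in how you propose to finish. You treat the extra summand $(t_1 +_{1-p} t_2)$ as something to be \emph{eliminated} for $p \neq 1/2$, and offer only a speculative plan via $(A_p)$ and iteration; it is not clear that plan converges, and in any case no elimination is needed. The paper's finishing move is to read $(*)$ in both directions around a single application of semilattice idempotency: in the term $t_1 \nplus t_2 \nplus (t_1 +_p t_2)$, replace the subterm $t_1 \nplus t_2$ by the right-hand side of $(*)$ to get $t_1 \nplus (t_2 +_p t_1) \nplus (t_1 +_p t_2) \nplus t_2 \nplus (t_1 +_p t_2)$; collapse the two copies of $(t_1 +_p t_2)$ by idempotency of $\nplus$; then apply $(*)$ backward to return to $t_1 \nplus t_2$. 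This absorption argument works for every $p$ and never touches the $(t_2 +_p t_1)$ summand. That two-line conclusion is the missing idea in your proposal.
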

\begin{proof}
First, we observe that 
\begin{equation}\label{eq:theory}
t_1\nplus t_2=t_1\nplus (t_2+_p t_1) \nplus  (t_1+_p t_2) \nplus  t_2 
\end{equation}
as proved by the following derivation.
$$\begin{array}{rcl}
t_1\nplus t_2 & \stackrel{(I_p)}{=}& (t_1\nplus t_2)+_p(t_1\nplus t_2) \\
 &  \stackrel{(D)}{=} & ( (t_1\nplus t_2)+_p t_1  ) \nplus  ( (t_1\nplus t_2) +_p t_2) \\
 & \stackrel{(D)}{=} &((t_1+_pt_1) \nplus  (t_2+_p t_1)) \nplus  ( (t_1+_pt_2) \nplus  (t_2+_pt_2)  )\\
 &  \stackrel{(I_p)}{=} & t_1\nplus (t_2+_p t_1) \nplus  (t_1+_p t_2) \nplus  t_2 
 \end{array}$$
Then we conclude with
$$\begin{array}{rcl}
t_1 \nplus  t_2 \nplus  (t_1 +_p t_2) & \stackrel{\eqref{eq:theory}}{=}  & t_1\nplus (t_2+_p t_1) \nplus  (t_1+_p t_2) \nplus  t_2 \nplus  (t_1+_p t_2)\\& \stackrel{(I_p)}{=} & t_1\nplus (t_2+_p t_1) \nplus  (t_1+_p t_2) \nplus  t_2
 \end{array}$$
\end{proof}

\begin{lemma}\label{lemma:convps}
Let $t,t_1, \dots, t_n \in T_{\Sigma_P}(X)$ such that $\iota^P([t]_P) \in conv\{\iota^P([t_1]_P), \dots, \iota^P([t_n]_P)\}$. Then there exist $p_1\dots p_{n-1}\in (0,1)$ such that $t =_P (\dots (t_1+_{p_1}t_2)+_{p_2} \dots )+_{p_{n-1}} t_n $.
\end{lemma}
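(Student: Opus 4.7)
The plan is to leverage the fact, recalled just before the lemma, that $\iota^P$ is an isomorphism of monads, so that $s =_P t$ iff $\iota^P([s]_P) = \iota^P([t]_P)$. Writing $d_i := \iota^P([t_i]_P)$ and $d := \iota^P([t]_P)$, the hypothesis becomes $d = \sum_{i=1}^n \alpha_i d_i$ for some weights $\alpha_i \ge 0$ summing to $1$. I will assume (as is in fact forced by the conclusion, since we insist on all $p_j$ lying strictly in $(0,1)$) that every $\alpha_i$ is strictly positive; should some $\alpha_i$ vanish, one drops the corresponding $t_i$ and applies the lemma to the shorter list.

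\textbf{Induction on $n$.} The base case $n=1$ is immediate: $d = d_1$ forces $t =_P t_1$ by injectivity of $\iota^P$ on equivalence classes, and there are no $p_j$ to produce. For the inductive step I would set $p_{n-1} := 1 - \alpha_n$, which lies in $(0,1)$ thanks to strict positivity of $\alpha_n$ and of $\alpha_1 + \dots + \alpha_{n-1}$. Rescale the first $n-1$ weights to $\alpha'_i := \alpha_i / p_{n-1}$, a probability distribution on $\{1,\dots,n-1\}$ with strictly positive entries. By surjectivity of $\iota^P$ pick $s \in T_{\Sigma_P}(X)$ with $\iota^P([s]_P) = \sum_{i=1}^{n-1} \alpha'_i d_i$; the inductive hypothesis then supplies $p_1, \dots, p_{n-2} \in (0,1)$ with $s =_P (\dots (t_1 +_{p_1} t_2) \dots ) +_{p_{n-2}} t_{n-1}$.

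\textbf{Concluding calculation and expected obstacle.} A direct computation using the definition of $\iota^P$ in \eqref{eq:iotap} gives $\iota^P([s +_{p_{n-1}} t_n]_P) = p_{n-1}\, \iota^P([s]_P) + (1 - p_{n-1})\, d_n = \sum_{i=1}^{n-1} \alpha_i d_i + \alpha_n d_n = d = \iota^P([t]_P)$, and injectivity of $\iota^P$ then forces $t =_P s +_{p_{n-1}} t_n$; substituting the inductive expression for $s$ yields the required left-leaning binary tree. The only real obstacle I anticipate is the book-keeping of the weights, in particular the need to keep $p_j \in (0,1)$ strictly at every stage: this point depends entirely on the non-degeneracy of all $\alpha_i$, and is the reason the lemma's statement implicitly requires the $t_i$ to appear with strictly positive coefficients in the convex combination.
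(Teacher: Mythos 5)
Your proof is correct, and it takes a genuinely different route from the paper's. The paper also exploits the isomorphism $\iota^P$, but instead of inducting on $n$ it applies the inverse $\kappa^P$ to the identity $\iota^P([t]_P)=\mu^{\Dis}\bigl(\sum_i q_i\cdot\iota^P([t_i]_P)\bigr)$ and pushes the convex combination through the monad-map square for $\kappa^P$, obtaining $[t]_P=\mu^{T_{\Sigma_P,E_P}}\bigl(\kappa^P(\sum_i q_i\,[t_i]_P)\bigr)$; it then asserts that, thanks to the axioms $E_P$, the resulting term over the constants $[t_i]_P$ can always be rewritten in the left-leaning shape $(\dots(t_1+_{p_1}t_2)\dots)+_{p_{n-1}}t_n$ with all $p_i\in(0,1)$. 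That normalization claim is left unproved, and your induction is essentially its missing proof: the choice $p_{n-1}=1-\alpha_n$ together with the rescaling $\alpha_i'=\alpha_i/p_{n-1}$ is exactly the computation that produces the asserted normal form. So your argument is the more elementary and self-contained of the two, at the cost of being less slick, while the paper's buys brevity by delegating the combinatorial work to an appeal to the theory $E_P$. You also correctly isolate the one genuine delicacy, which the paper silently glosses over: if some coefficient $\alpha_i$ vanishes, the statement as literally written can fail (take $n=2$, $t=t_1=x$, $t_2=y$ with $x\neq y$: then $\delta_x\in\conv\{\delta_x,\delta_y\}$, yet $x=_P x+_{p}y$ would force $p=1\notin(0,1)$), so one must either assume all weights strictly positive or allow the conclusion to range over a sublist of the $t_i$; your fix of dropping the zero-weight terms is the right one and is all that the downstream use in Lemma~\ref{lemma:removingconv} requires.
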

\begin{proof}
If $\iota^P([t]_P) \in \conv\{\iota^P([t_1]_P), \dots, \iota([t_n]_P)\}$, then $\iota^P([t]_P)= \mu^\Dis (\sum_i q_i\cdot \iota^P([t_i]_P))$. Since $\iota^P$ is a monad map, its inverse $\kappa^D\colon \Dis \Rightarrow T_{\Sigma_P,E_P}$ is also a monad map and in particular, it makes the following diagram commutes.
$$\xymatrix{
\Dis\Dis X \ar[d]_{\mu_X^\Dis} \ar[r]^{\Dis \kappa^P_X}& \Dis T_{\Sigma_P,E_P}X \ar[r]^{\kappa^P_{T_{\Sigma_P,E_P}}}& T_{\Sigma_P,E_P} T_{\Sigma_P,E_P}X \ar[d]^{\mu_X^{T_{\Sigma_P,E_P}}} \\
\Dis X \ar[rr]_{\kappa_X^P}&& T_{\Sigma_P,E_P}
}$$
Therefore, we have that
\begin{align*}
[t]_P& = \kappa^P \circ \iota^P([t]_P)\\
& = \kappa^P \circ ( \mu^\Dis (\sum_i q_i\cdot \iota^P([t_i]_P)))\\
& = \mu^{T_{\Sigma_P,E_P}} \circ \kappa^P \circ \Dis \kappa  (\sum_i q_i\cdot \iota^P([t_i]_P))\\
& = \mu^{T_{\Sigma_P,E_P}} \circ \kappa^P  (\sum_i q_i\cdot \kappa^P\circ \iota^P([t_i]_P))\\
& = \mu^{T_{\Sigma_P,E_P}} \circ \kappa^P  (\sum_i q_i\cdot [t_i]_P)
\end{align*}
Observe that $\sum_i q_i\cdot [t_i]_P \in \Dis T_{\Sigma_p,E_P} (X)$ and that $\kappa^P_{T_{\Sigma_p,E_P} X}$ maps it into an element of  $T_{\Sigma_p,E_P}  T_{\Sigma_p,E_P} (X)$, namely a term obtained by the operations $+_p$ and the constants $[t_i]_P$. Thanks to the axioms in $E_P$ any such term can always be written as  $(\dots ([t_1]_P+_{p_1}[t_2]_P)+_{p_2} \dots )+_{p_{n-1}} [t_n]_P $ for some $p_i\in (0,1)$. Then, the application of $\mu^{T_{\Sigma_P,E_P}}$ to $[(\dots ([t_1]_P+_{p_1}[t_2]_P)+_{p_2} \dots )+_{p_{n-1}} [t_n]_P ]_P$ gives just $[(\dots (t_1+_{p_1}t_2)+_{p_2} \dots )+_{p_{n-1}} t_n]_P$. Thus $t =_P (\dots (t_1+_{p_1}t_2)+_{p_2} \dots )+_{p_{n-1}} t_n $.
\end{proof}

\begin{lemma}\label{lemma:removingconv}
Let $t,t_1, \dots, t_n \in T_{\Sigma_P}(X)$ such that $\iota^P([t]_P) \in \conv\{\iota^P([t_1]_P), \dots, \iota^P([t_n]_P)\}$. Then 
$$i(t_1) \nplus   \dots \nplus  i(t_n) =_E i(t_1) \nplus   \dots \nplus  i(t_n) \nplus  i(t)$$
\end{lemma}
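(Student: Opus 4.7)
The plan is to combine Lemma \ref{lemma:convps} with the convexity law (Lemma \ref{lemma:convexitylaw}) via structural induction. First I would invoke Lemma \ref{lemma:convps} to obtain a purely probabilistic term $s \in T_{\Sigma_P}(X)$ built from $t_1,\dots,t_n$ using only the operators $+_{p_i}$ such that $t =_{E_P} s$. Since all axioms of $E_P$ are contained in $E$, this also gives $i(t) =_E i(s)$, so it suffices to prove
$$U \nplus i(s) =_E U, \qquad \text{where } U := i(t_1)\nplus\dots\nplus i(t_n).$$

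I would then prove this stronger statement by structural induction on $s$, viewed as a $\Sigma_P$-term whose leaves are drawn from $\{t_1,\dots,t_n\}$. For the base case $s = t_k$, the claim $U \nplus i(t_k) =_E U$ follows immediately from the idempotency axiom $(I)$ together with associativity and commutativity of $\nplus$. For the inductive step $s = s_1 +_p s_2$, two uses of the induction hypothesis give $U =_E U \nplus i(s_1)$ and $U =_E U \nplus i(s_2)$, which chain (via associativity/commutativity of $\nplus$) to $U =_E U \nplus i(s_1) \nplus i(s_2)$. Applying the convexity law of Lemma \ref{lemma:convexitylaw} by congruence of $\nplus$ to the subterm $i(s_1)\nplus i(s_2)$ inserts $i(s_1) +_p i(s_2) = i(s)$ as an extra summand:
$$U \nplus i(s_1) \nplus i(s_2) =_E U \nplus i(s_1) \nplus i(s_2) \nplus i(s).$$
A final re-absorption of $i(s_1)$ and $i(s_2)$ into $U$, again via the induction hypothesis (in the other direction), yields $U \nplus i(s) =_E U$, as desired. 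Specializing to the $s$ produced by Lemma \ref{lemma:convps} closes the proof.

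The main obstacle I expect is the careful bookkeeping when invoking the convexity law inside a larger nondeterministic sum: the law is stated for a two-summand $\nplus$, so one must apply congruence of $\nplus$ and crucially arrange that \emph{both} $i(s_1)$ and $i(s_2)$ already occur as summands before invoking it, which is exactly what the two preliminary applications of the induction hypothesis arrange. The (possible) typo in Lemma \ref{lemma:convexitylaw} notwithstanding, its proof clearly establishes the form $u \nplus v =_E u \nplus v \nplus (u +_p v)$ that this argument needs.
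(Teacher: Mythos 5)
Your proposal is correct and follows essentially the same route as the paper: both proofs first invoke Lemma~\ref{lemma:convps} to express $t$ (up to $=_{E_P}$) as a $+_p$-combination of $t_1,\dots,t_n$, and then repeatedly apply the convexity law of Lemma~\ref{lemma:convexitylaw} to insert $i(t)$ as an absorbable summand; you merely package the paper's explicit add-then-remove iteration along the left-nested term as a structural induction, and you correctly flag that the convexity law's intended conclusion is $t_1 \nplus t_2 =_E t_1 \nplus t_2 \nplus (t_1 +_p t_2)$.
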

\begin{proof}
By Lemma \ref{lemma:convps}, we take $p_1,\dots , p_{n-1}$ such that
\begin{equation}\label{eq:d}
t =_P (\dots (t_1+_{p_1}t_2)+_{p_2} \dots )+_{p_{n-1}} t_n\text{.}
\end{equation}
By Lemma \ref{lemma:convexitylaw}, $i(t_1) \nplus   \dots \nplus  i(t_n)$ is $E$-equivalent to 
$i(t_1) \nplus   \dots \nplus  i(t_n) \nplus  i(t_1+_{p_1}t_2)$. By applying Lemma \ref{lemma:convexitylaw} again, one obtains $i(t_1) \nplus   \dots \nplus  i(t_n) \nplus  i(t_1+_{p_1}t_2) \nplus  i((t_1+_{p_1}t_2)+_{p_2}t_3) $. We can then remove $i(t_1+_{p_1}t_2)$ using Lemma \ref{lemma:convexitylaw}, to obtain 
 $$i(t_1) \nplus   \dots \nplus  i(t_n)  \nplus i((t_1+_{p_1}t_2)+_{p_2}t_3)  \text{.}$$
 By iterating this procedure, one obtains
 $$i(t_1) \nplus   \dots \nplus  i(t_n) \nplus  i((\dots (t_1+_{p_1}t_2)+_{p_2} \dots )+_{p_{n-1}} t_n)$$
 which, by \eqref{eq:d}, is  $ i(t_1) \nplus   \dots \nplus  i(t_n) \nplus  i(t)$.
\end{proof}

\begin{proposition}\label{prop:normalform}
For all terms $t\in T_{\Sigma} (X)$, there exist $t_1, \dots, t_n\in T_{\Sigma_P}$ such that
$$t=_E i(t_1)\nplus \dots \nplus i(t_n)$$
and $\{\iota^P([t_1]_P), \dots, \iota^P([t_n]_P) \}$ is the base of $\iota([t]_E)$.
\end{proposition}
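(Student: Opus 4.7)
The plan is to reduce any term $t$ to an $E$-equivalent term in n-p form, and then iteratively strip away those probabilistic summands whose image under $\iota^P$ is \emph{redundant}, i.e.\ lies in the convex combination of the others. The target is the unique base provided by Theorem~\ref{thm:uniquebase}, and the two tools that do all the work are Proposition~\ref{prop:npform} (for the reduction to n-p form) and Lemma~\ref{lemma:removingconv} (for the stripping).

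Concretely, I would first apply Proposition~\ref{prop:npform} to obtain terms $s_1,\dots,s_m \in T_{\Sigma_P}(X)$ with $t =_E i(s_1) \nplus \dots \nplus i(s_m)$. Unfolding the definition of $\iota$ on $\nplus$ and using Lemma~\ref{prop:switch}, this gives
\[
\iota([t]_E) \;=\; \{\iota^P([s_1]_P)\} \oplus \dots \oplus \{\iota^P([s_m]_P)\} \;=\; \conv\{\iota^P([s_1]_P),\dots,\iota^P([s_m]_P)\}.
\]
Call this convex set $S$ and let $\{d_1,\dots,d_n\}$ be its unique base, which exists by Theorem~\ref{thm:uniquebase}. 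Crucially, as noted in Proof~I of that theorem (property (*)), the base is contained in every generating set; hence each $d_k$ equals $\iota^P([s_{j_k}]_P)$ for some index $j_k \in \{1,\dots,m\}$.

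After permuting the summands (which is sound by axiom (C) of $E_N$) I may assume $d_k = \iota^P([s_k]_P)$ for $k=1,\dots,n$, so that $\{\iota^P([s_1]_P),\dots,\iota^P([s_n]_P)\} = \{d_1,\dots,d_n\}$ is already the base of $S$. Every remaining index $j \in \{n+1,\dots,m\}$ then satisfies $\iota^P([s_j]_P) \in S = \conv\{\iota^P([s_1]_P),\dots,\iota^P([s_n]_P)\}$, so Lemma~\ref{lemma:removingconv} applies and yields
\[
i(s_1) \nplus \dots \nplus i(s_n) \;=_E\; i(s_1) \nplus \dots \nplus i(s_n) \nplus i(s_j).
\]
Iterating this identity once for each $j = n+1,\dots,m$ recovers the full n-p form on the right and hence $t =_E i(s_1) \nplus \dots \nplus i(s_n)$, which is the desired decomposition.

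The only subtle point, and the place where care is needed, is the bookkeeping in the reduction step: the list $s_1,\dots,s_m$ produced by Proposition~\ref{prop:npform} may contain several summands that map to the same base element (not just redundant convex combinations), so one must view duplicates as a special case of Lemma~\ref{lemma:removingconv} (a distribution is trivially in the convex hull of itself) and prune them as well. Once this is acknowledged, the argument is a clean combination of the unique base theorem, the n-p normal form, and the convexity-law removal lemma; no new ideas are required beyond the lemmas already in place.
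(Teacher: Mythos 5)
Your proposal is correct and follows essentially the same route as the paper's own proof: reduce to n-p form via Proposition~\ref{prop:npform}, observe that the unique base of $\iota([t]_E)$ must occur among the generators $\iota^P([s_j]_P)$, reorder by commutativity of $\nplus$, and strip the redundant summands by iterating Lemma~\ref{lemma:removingconv}. The only difference is presentational: you explicitly justify (via property (*) from Proof~I of Theorem~\ref{thm:uniquebase}) why the base is contained in the generating set, and you flag the duplicate-summand case, both of which the paper leaves implicit.
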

\begin{proof}
By Proposition \ref{prop:npform}, there exists $t' \in T_\Sigma(X)$ in n-p form such that $t=_E t'$.
Take $t_1', \dots, t_m'\in T_{\Sigma_P}$ such that $$t'=i(t_1)\nplus \dots \nplus i(t_m)\text{.}$$
By definition of $\iota$, $\iota([t]_E)=\iota(i([t_1]_P)) \oplus \dots \oplus \iota(i([t_m]_P))$ which by Lemma \ref{prop:switch} is $\{ \iota^P([t_1]_P) \} \oplus \dots \oplus \{ \iota^P([t_m]_P) \}$. By definition of $\oplus$, this is just $\conv \{\iota^P([t_1]_P), \dots, \iota^P([t_m]_P) \}$. Therefore, to conclude that $\{\iota^P([t_1]_P), \dots, \iota^P([t_m]_P) \}$ is the base of $\iota([t]_E)$ we only need to show that none of the $\iota^P([t_i]_P)$ is in the convex combination of the others $\iota^P([t_j]_P)$. This is not true in general, but thanks to Lemma \ref{lemma:removingconv} all such $t_i$ can be removed, while preserving $E$-equivalence. To be more precise, by associativity and commutativity of $\nplus $, we can assume that $\iota^P([t_1]_P), \dots, \iota^P([t_n]_P)$ form the base, while $\iota^P([t_{n+1}]_P), \dots, \iota^P([t_m]_P)$ are in $\conv \{\iota^P([t_1]_P), \dots, \iota^P([t_n']_P) \}$. Then, by repeating $(m-n)$-times  Lemma \ref{lemma:removingconv}, we conclude that 
$t'=_E i(t_1)\nplus \dots \nplus i(t_{n})$.
\end{proof}

\begin{proposition}\label{prop:inverse2}
$\kappa \after \iota   = \id_{T_{\Sigma,E}}$
\end{proposition}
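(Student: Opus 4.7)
The proof is essentially a direct unfolding of definitions, given Proposition \ref{prop:normalform}. My plan is as follows. Given any term $t \in T_{\Sigma}(X)$ representing a class $[t]_E$, I first apply Proposition \ref{prop:normalform} to obtain purely probabilistic terms $t_1, \dots, t_n \in T_{\Sigma_P}(X)$ such that $t =_E i(t_1) \nplus \cdots \nplus i(t_n)$ and such that $\{\iota^P([t_1]_P), \dots, \iota^P([t_n]_P)\}$ is the unique base of the convex set $\iota([t]_E)$ (uniqueness being guaranteed by Theorem \ref{thm:uniquebase}, on which the very definition of $\kappa$ rests).

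Second, I unfold the definition of $\kappa$ on this base. Since the unique base of $\iota([t]_E)$ is exactly $\{\iota^P([t_1]_P), \dots, \iota^P([t_n]_P)\}$, the definition of $\kappa$ gives
$$\kappa(\iota([t]_E)) \;=\; [\, i(\kappa^P(\iota^P([t_1]_P))) \nplus \cdots \nplus i(\kappa^P(\iota^P([t_n]_P)))\, ]_E.$$
Because $\kappa^P$ is by construction the inverse of the isomorphism $\iota^P$, each composite $\kappa^P(\iota^P([t_j]_P))$ collapses to $[t_j]_P$, so that $i(\kappa^P(\iota^P([t_j]_P))) = i([t_j]_P) = [i(t_j)]_E$ by the overloaded definition of $i$ on equivalence classes. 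Hence the right-hand side equals $[i(t_1) \nplus \cdots \nplus i(t_n)]_E$, which by the first conclusion of Proposition \ref{prop:normalform} coincides with $[t]_E$. Combined with Proposition \ref{prop:inverse1}, this will establish that $\iota$ is an isomorphism and thus complete the alternative proof of Theorem \ref{thm:presentation}.

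Since all the substantial difficulties — reduction to n-p form (Proposition \ref{prop:npform}), elimination of redundant summands via the convexity law (Lemmas \ref{lemma:convexitylaw}--\ref{lemma:removingconv}), and the existence of unique bases (Theorem \ref{thm:uniquebase}) — have already been absorbed into the preceding results, no real obstacle remains. The only point deserving a careful sentence is the bookkeeping around the injection $i$ and the equivalence classes: specifically, that $i$ is well defined on $T_{\Sigma_P,E_P}(X)$ because $=_{E_P}$ refines $=_E$ on $p$-terms, and that it commutes with the $\kappa^P/\iota^P$ round trip in the straightforward sense used above. I therefore expect the write-up to be a short, clean, three-line argument.
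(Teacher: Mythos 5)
Your argument is correct and follows exactly the same route as the paper: apply Proposition~\ref{prop:normalform} to rewrite $t$ in n-p form whose summands map under $\iota^P$ to the unique base of $\iota([t]_E)$, then unfold the definition of $\kappa$ and cancel $\kappa^P \after \iota^P$. The bookkeeping remark about the overloaded injection $i$ is the only point the paper leaves implicit, and you handle it correctly.
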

\begin{proof}
We need to prove that for all terms $t\in T_\Sigma(X)$, $[t]_E= \kappa \after \iota ([t]_E)$. 
By Proposition \ref{prop:normalform}, there exists $t_1, \dots, t_n \in T_{\Sigma_P}(X)$  such that 
$$t=_Ei(t_1)\nplus \dots \nplus i(t_n)$$

and  $\{\iota^P([t_1]_P), \dots, \iota^P([t_n]_P)\}$ is the base for $\iota([t]_E)$.

By definition of $\kappa$,  $\kappa(\iota([t]_E))$ is exactly $[i(\kappa^P \circ \iota^P[t_1]_P)\nplus \dots \nplus i(\kappa^P \circ \iota^P[t_n]_P)]_E = [t]_E$.
\end{proof}

This is enough to conclude the proof of Theorem \ref{thm:presentation}. Indeed we have that $\iota\colon T_{\Sigma,E} \Rightarrow C$ is a monad map and that, by Propositions \ref{prop:inverse1} and \ref{prop:inverse2}, it is an isomorphism.

\bibliography{biblio}

\end{document}